\documentclass[journal,twocolumn,10pt,twoside]{IEEEtranTCOM}
\newtheorem{proposition}{Proposition}
\newtheorem{theorem}{Theorem}
\newtheorem{remark}{Remark}

\usepackage{cite}
\usepackage{amsmath,amssymb,amsfonts}
\usepackage{algorithmic}
\usepackage{graphicx}
\usepackage{textcomp}
\usepackage{xcolor}
\usepackage{subcaption}  
\usepackage{booktabs}   
\usepackage{hyperref}
\normalsize

\ifCLASSINFOpdf
\else
\fi

\begin{document}
		
		\title{Fundamental Performance Limits of Non-Coherent ISAC: A Data-Aided Sensing Perspective}
		
		\author{Dongsheng Peng,~\IEEEmembership{Member,~IEEE,} Chengkai Zhao, Yihong Li,~\IEEEmembership{Member,~IEEE,} Zhiqing Wei, \\
			Jun Chen,~\IEEEmembership{Senior~Member,~IEEE} and Ping Chen~\IEEEmembership{Senior~Member,~IEEE}
			
		}
		
		\maketitle

		\begin{abstract}
			In this paper, we investigate a bistatic multiple-input multiple-output (MIMO) integrated sensing and communication (ISAC) system over block-fading channels, focusing on the scenario where the sensing and communication receivers (Rxs) are co-located. Under the assumption of unknown channel state information (CSI) at the Rx, two schemes are considered: pilot sensing (PS) and data-aided sensing (DAS). The communication rate-sensing distortion functions for both schemes are characterized. For the DAS scheme, a closed-form asymptotic expression for the sensing distortion is derived by using random matrix theory (RMT).
			The asymptotic performance analysis explicitly quantifies the significant gains of the DAS scheme, revealing a strict $3$ dB effective SNR improvement in the low-SNR regime and a strictly faster performance scaling rate in the high-SNR limit compared to the PS scheme.
		\end{abstract}

		\begin{IEEEkeywords}
			Integrated sensing and communications, bistatic, random matrix theory, pilot sensing, data-aided sensing.
		\end{IEEEkeywords}

		%
		\IEEEpeerreviewmaketitle

		\section{Introduction}
		
		Integrated sensing and communications (ISAC) \cite{liu2022isac} has emerged as a key enabling technology for the sixth-generation (6G) wireless networks, empowering a single system to perform both sensing and communication tasks over a shared spectrum and hardware platform. By exploiting the synergy between the two functionalities, ISAC yields substantial gains in both spectral and hardware efficiency \cite{liu2022survey_comst}. These inherent advantages position ISAC as a foundational technology for emerging scenarios such as vehicle-to-everything (V2X), the low-altitude economy, and smart cities \cite{chafii2023twelve, saad2019vision}, where real-time environmental awareness must coexist with high-throughput communication services.
		
		Current research on ISAC primarily revolves around three design paradigms: communication-centric \cite{sturm2011waveform,wu2022integrating,wu2024joint,hua2024integrated}, radar-centric \cite{hassanien2016dual,huang2020majorcom,temiz2023experimental}, and joint waveform design \cite{liu2022transmit,liu2020joint,ren2024fundamental,hua2024mimo}. While the latter two paradigms offer favorable sensing performance or flexible performance tradeoffs, they necessitate modifications to existing waveform structures or introduce additional signal design complexity. Consequently, they face certain challenges regarding deployment costs and standard compatibility. In contrast, communication-centric ISAC designs directly repurpose standard communication waveforms to execute sensing tasks \cite{liu2025sensing}. Driven by their core advantages of low implementation overhead and full compatibility with existing cellular infrastructure, this approach has become a prominent research hotspot in the academic community in recent years.
		
		Under the communication-centric ISAC framework, existing studies primarily rely on reference signals (i.e., pilots) embedded in standard communication frame structures to perform sensing tasks \cite{li2026rethinking}. Since pilot signals occupy only a limited portion of time-frequency resources within the overall frame structure, typically ranging from 0.15\% to 25\%, this constrains the sensing resolution and detection accuracy. A natural enhancement is to incorporate the data payload signals, which account for the vast majority of system resources, into the sensing processing pipeline, namely, data-aided sensing (DAS). However, data payload signals are inherently random and are not designed for sensing purposes, which introduces new challenges to the design and performance analysis of ISAC systems.
		
		Currently, the majority of research on DAS focuses on spatial-domain design \cite{he2024mse,xu2025exploiting,xie2026sensing}, which aims to enhance ISAC performance by optimizing precoding matrices or beamforming strategies under predetermined frame structures. In addition, there are several task-oriented DAS studies. For example, for monostatic sensing in frequency-division duplex (FDD) systems, \cite{hua2024integrated} proposed the simultaneous utilization of both pilots and data payloads for target detection. In bistatic ISAC scenarios, the problem of target detection jointly utilizing pilots and data payloads was investigated in \cite{xie2025bistatic}, while \cite{yu2026sensing} focused on repurposing uplink data for multi-source localization in pilot-free scenarios.
		Compared with spatial precoding optimization, directly quantifying the impact of pilot training duration and power allocation between pilots and data provides a more fundamental information-theoretic understanding of the tradeoff between communication and sensing. It is worth noting that, in traditional multi-antenna communication systems, capacity optimization problems involving training duration and pilot/data power allocation have been well studied \cite{hassibi2003training}. However, in ISAC scenarios employing DAS, since the data payload simultaneously serves the dual functions of communication and sensing, a systematic analysis of the joint impact of pilot training duration and pilot/data power allocation on the communication and sensing tradeoff performance is still lacking.
		
		Against this backdrop, this paper investigates the joint optimization of pilot training duration and power allocation under DAS, with the goal of revealing their impact on the tradeoff between communication and sensing.
		Specifically, we consider a bistatic ISAC scenario where the sensing receiver (Rx) is co-located with the communication Rx, and the sensing task focuses on the estimation of the target response matrix (TRM). Our main contributions are summarized as follows:
		
		\begin{itemize}
			\item To address the challenge that the sensing distortion is analytically intractable due to the random data matrix in the DAS scheme, we employ random matrix theory (RMT) to derive a closed-form asymptotic expression for the mean squared error (MSE) of the TRM estimation.
			
			\item We derive the closed-form expressions of the communication rate-sensing distortion functions $R(D)$ for both the pilot-only sensing (PS) and DAS schemes. This reveals the analytical relationship among the optimal pilot duration, the power allocation strategy, and the achievable communication–sensing tradeoff for each scheme.

			\item We derive closed-form asymptotic expressions for the effective signal-to-noise ratio (SNR) of both schemes in the low-SNR and high-SNR regimes. The analytical results explicitly quantify the fundamental superiority of the DAS scheme over the PS scheme, demonstrating a strict $3$~dB effective SNR gain in the low-SNR regime and a fundamentally faster asymptotic convergence rate ($\mathcal{O}(1/K^2)$ versus $\mathcal{O}(1/\sqrt{K})$) in the high-SNR limit. Furthermore, under asymptotic conditions (i.e., in the limit of large coherence blocks or high SNRs), the communication rate and sensing distortion of the DAS scheme become decoupled, whereas the PS scheme seamlessly recovers the classical findings reported in \cite{hassibi2003training}\footnote{In the high-SNR limit, the channel estimation approaches perfection. Consequently, the sensing distortion constraint is automatically satisfied.}.
		\end{itemize}
		
		The remainder of this paper is organized as follows. Section \ref{sec:system_model} defines the system model and the performance evaluation metrics. Section \ref{sec:pilot_sensing} and Section \ref{sec:data_assisted_sensing} derive the $R(D)$ functions and the optimal resource allocation strategies for the PS and DAS schemes, respectively. Section \ref{sec:asymptotic_analysis} presents the theoretical analysis in asymptotic regimes. Section \ref{sec:numerical_results} provides the numerical simulation results. Finally, Section \ref{sec:conclusion} concludes the paper.

		\textit{Notation:} $\mathbf{X}$ and $\mathbf{x}$ denote a matrix and a column vector, respectively, while $x$ or $X$ denotes a scalar. $\mathbb{C}^{M \times N}$ represents the space of $M \times N$ complex-valued matrices, and $\mathbb{R}^+$ denotes the set of positive real numbers. For a matrix $\mathbf{A}$, $\mathbf{A}^T$, $\mathbf{A}^*$, $\mathbf{A}^H$, and $\mathbf{A}^{-1}$ denote its transpose, conjugate, conjugate transpose, and inverse, respectively. $\mathbf{A} \succeq 0$ indicates that $\mathbf{A}$ is a positive semi-definite matrix. $\operatorname{Tr}(\mathbf{A})$ and $\det(\mathbf{A})$ stand for the trace and determinant of the matrix; $\operatorname{vec}(\mathbf{A})$ is the column-wise vectorization operator; $\operatorname{diag}(a_1,\ldots,a_M)$ represents a diagonal matrix with diagonal elements $a_1,\ldots,a_M$; $\|\mathbf{A}\|_F$ denotes the Frobenius norm. $\otimes$ represents the Kronecker product. $\mathbf{I}_N$ is the $N \times N$ identity matrix. $\mathbb{E}[\cdot]$ is the statistical expectation operator. $\mathcal{CN}(\boldsymbol{\mu}, \mathbf{\Sigma})$ denotes the circularly symmetric complex Gaussian distribution with mean $\boldsymbol{\mu}$ and covariance matrix $\mathbf{\Sigma}$. $\mathcal{O}(\cdot)$ stands for the standard big-O notation for asymptotic expansions.

		\section{System Model}\label{sec:system_model}
		
		We consider a point-to-point (P2P) bistatic ISAC system, as illustrated in Fig. \ref{fig:system_model}. The system consists of a multi-antenna ISAC transmitter (Tx) and a multi-antenna ISAC Rx. In the considered configuration, the Rx adopts a co-located architecture, comprising a communication receiver (Rx1) and a sensing receiver (Rx2), which are responsible for the communication data demodulation and target sensing tasks, respectively.
		
		Specifically, the system operates in a non-coherent block fading mode, implying that the multiple-input multiple-output (MIMO) channel matrix $\mathbf{H}$ is completely unknown to both the Tx and the Rx at the beginning of each coherence block. Under this setup, $\mathbf{H}$ carries a dual physical significance: for Rx1, $\mathbf{H}$ represents the fading channel for data transmission, whereas for Rx2, $\mathbf{H}$ serves as the TRM that characterizes the environmental scattering features. 
		There exists a distinct difference in their requirements for estimating $\mathbf{H}$: Rx1 relies on a basic channel estimation to accomplish data demodulation, whereas Rx2 must achieve a significantly higher-precision channel reconstruction to satisfy the target sensing demands.
		\begin{figure}[t]
			\centerline{\includegraphics[width=8.5cm]{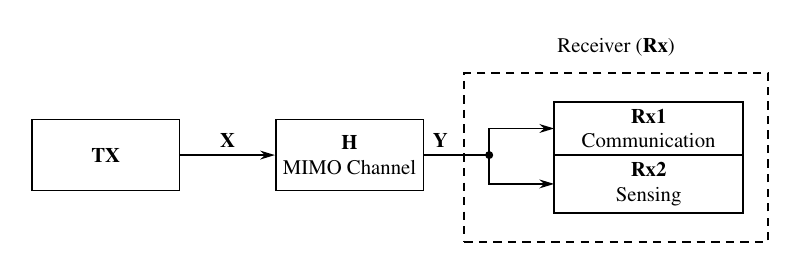}} \caption{P2P MIMO ISAC system.} 
			\label{fig:system_model}
		\end{figure}

		\subsection{Signal Model}\label{subsec:signal_model}
		
		Consider the aforementioned MIMO ISAC system equipped with $M$ transmit antennas and $N$ receive antennas. It is assumed that the channel follows a quasi-static block-fading model \cite{xiong2023fundamental}, meaning that the channel matrix remains constant within a coherence block of length $T$ symbols, and varies independently and randomly across different blocks. Within a single coherence block, the received signal model is given by
		\begin{align}
			\mathbf{Y} &= \mathbf{H}\mathbf{X} + \mathbf{Z}, \label{eq:mimo_model}
		\end{align}
		where $\mathbf{Y} \in \mathbb{C}^{N \times T}$ and $\mathbf{X} \in \mathbb{C}^{M \times T}$ represent the received and transmitted signals, respectively. The matrix $\mathbf{Z} \in \mathbb{C}^{N \times T}$ denotes the additive white Gaussian noise (AWGN), whose entries are independent and identically distributed (i.i.d.) following $\mathcal{CN}(0, \sigma_z^2)$. Furthermore, $\mathbf{H} \in \mathbb{C}^{N \times M}$ represents the communication channel matrix/TRM, with its entries being i.i.d. $\sim \mathcal{CN}(0, 1)$.

		Specifically, the transmitted signal matrix $\mathbf{X}$ is composed of two parts: pilots and data, which can be expressed as
		\begin{align}
			\mathbf{X} = [\mathbf{X}_{\tau}, \mathbf{X}_{d}],
		\end{align}
		where $\mathbf{X}_{\tau} \in \mathbb{C}^{M \times T_{\tau}}$ denotes the deterministic pilot signals transmitted during the channel estimation phase, intended for channel estimation/TRM estimation. The matrix $\mathbf{X}_{d} \in \mathbb{C}^{M \times T_{d}}$ represents the data signals transmitted during the data transmission phase for communication purposes. The durations of the two phases satisfy the time constraint
		\begin{align}
			T_{\tau} + T_{d} = T. \label{eq:time_partition}
		\end{align}
		\begin{remark}
			The condition $T_\tau \geq M$ is necessary to ensure that the channel matrix $\mathbf{H}$ can be effectively estimated, as the number of observations must not be less than the number of unknown parameters.
		\end{remark}
		Accordingly, the received signals for the two phases are given by
		\begin{align}
			\mathbf{Y}_\tau &= \mathbf{H}\mathbf{X}_\tau + \mathbf{Z}_\tau, \label{eq:pilot_model} \\
			\mathbf{Y}_d    &= \mathbf{H}\mathbf{X}_d    + \mathbf{Z}_d.    \label{eq:data_model}
		\end{align}
		The transmit power constraints for the two phases are defined as $\operatorname{Tr}(\mathbf{X}_\tau \mathbf{X}_\tau^H) = P_\tau T_\tau$ and $\mathbb{E}[\operatorname{Tr}(\mathbf{X}_d \mathbf{X}_d^H)] = P_d T_d$, respectively. Under a total power budget $P$, the overall power constraint of the system is given by
		\begin{align}
			P_\tau T_\tau + P_d T_d \leq PT. \label{eq:power_budget}
		\end{align}
		Let us define the SNR for each phase and the total system SNR as $\rho_\tau := P_\tau/\sigma_z^2$, $\rho_d := P_d/\sigma_z^2$, and $\rho := P/\sigma_z^2$, respectively. Consequently, \eqref{eq:power_budget} can be equivalently rewritten as
		\begin{align}
			\rho_\tau T_\tau + \rho_d T_d \leq \rho T. \label{eq:snr_constraint}
		\end{align}
		
		\subsection{Performance Metrics}\label{subsec:performance_metric}
		\subsubsection{Sensing Metric}
		
		In this paper, we adopt the MSE of the TRM estimation to evaluate the sensing performance, which is defined as
		\begin{align}
			\mathrm{MSE} := \mathbb{E}\!\left[\|\mathbf{H} - \hat{\mathbf{H}}\|_F^2\right],
			\label{eq:mse_def}
		\end{align}
		where $\hat{\mathbf{H}}$ is the estimate of $\mathbf{H}$. By utilizing the Kronecker product identity, \eqref{eq:pilot_model} can be equivalently vectorized as
		\begin{align}\label{eq:pilot_vec}
			\mathbf{y} = \tilde{\mathbf{X}} \mathbf{h} + \mathbf{z}_\tau,
		\end{align}
		where $\mathbf{y} := \mathrm{vec}(\mathbf{Y}_\tau)$, $\mathbf{h} := \mathrm{vec}(\mathbf{H})$, $\mathbf{z}_\tau := \mathrm{vec}(\mathbf{Z}_\tau)$, and $\tilde{\mathbf{X}} := \mathbf{X}_\tau^T \otimes \mathbf{I}_N$.  The covariance matrix of the estimation error $\tilde{\mathbf{h}} := \mathbf{h} - \hat{\mathbf{h}}$ for the linear minimum mean squared error (LMMSE) estimate $\hat{\mathbf{h}}$ of $\mathbf{h}$ based on $\mathbf{y}$
		is given by
		\begin{align}
			\mathbf{R}_{\tilde{\mathbf{h}}} = \left(\mathbf{R}_h^{-1} + \tilde{\mathbf{X}}^H \mathbf{R}_z^{-1} \tilde{\mathbf{X}}\right)^{-1},
			\label{eq:error_cov}
		\end{align}
		where $\mathbf{R}_h$ and $\mathbf{R}_z$ are the prior covariance matrices of the channel vector and the noise vector, respectively. Based on the statistical assumptions $\mathbf{R}_h = \mathbf{I}_{NM}$ and $\mathbf{R}_z = \sigma_z^2 \mathbf{I}_{NT_\tau}$, substituting these into \eqref{eq:error_cov} simplifies the sensing MSE as follows:
		\begin{align}
			\mathrm{MSE}
			&= \operatorname{Tr}(\mathbf{R}_{\tilde{\mathbf{h}}}) \nonumber \\
			&= N \cdot \operatorname{Tr}\!\left[\left(\mathbf{I}_M + \frac{1}{\sigma_z^2}\mathbf{X}_\tau \mathbf{X}_\tau^H\right)^{-1}\right].
			\label{eq:mse_expr}
		\end{align}
		Subject to the power constraint $\operatorname{Tr}(\mathbf{X}_\tau \mathbf{X}_\tau^H) \leq P_\tau T_\tau$, the optimal pilot matrix that minimizes \eqref{eq:mse_expr} and the corresponding minimum MSE are given by the following proposition\cite{hassibi2003training}.
		\begin{proposition}
			\label{prop:optimal_pilot}
			The optimal pilot matrix that minimizes the sensing MSE satisfies
			\begin{align}
				\mathbf{X}_\tau \mathbf{X}_\tau^H = \frac{P_\tau T_\tau}{M}\,\mathbf{I}_M,
				\label{eq:optimal_pilot}
			\end{align}
			which implies that the pilot matrix is an orthogonal matrix with equal power allocation. The corresponding minimum sensing MSE is given by
			\begin{align}
				\mathrm{MSE}^* = \frac{NM}{1 + \rho_\tau T_\tau / M}.
				\label{eq:mse_opt}
			\end{align}
		\end{proposition}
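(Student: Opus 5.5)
The plan is to reduce the matrix problem to a scalar optimization over the eigenvalues of $\mathbf{X}_\tau\mathbf{X}_\tau^H$ and then apply convexity. Write the eigendecomposition $\mathbf{X}_\tau\mathbf{X}_\tau^H = \mathbf{U}\boldsymbol{\Lambda}\mathbf{U}^H$ with $\boldsymbol{\Lambda}=\operatorname{diag}(\lambda_1,\ldots,\lambda_M)$ and $\lambda_i\ge 0$. Because the trace is unitarily invariant, \eqref{eq:mse_expr} becomes
\begin{align}
\mathrm{MSE} = N\sum_{i=1}^{M}\frac{1}{1+\lambda_i/\sigma_z^2},
\end{align}
and the power constraint reads $\sum_i \lambda_i \le P_\tau T_\tau$. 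The eigenvector matrix $\mathbf{U}$ does not appear anywhere, so only the eigenvalue profile has to be chosen.

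Next we use that $f(\lambda)=1/(1+\lambda/\sigma_z^2)$ is strictly convex and strictly decreasing on $\lambda\ge 0$. Since $f$ is decreasing, the power constraint is active at the optimum: if $\sum_i\lambda_i<P_\tau T_\tau$, increasing any $\lambda_i$ strictly lowers the MSE. With the sum fixed at $P_\tau T_\tau$, Jensen's inequality gives
\begin{align}
\frac{1}{M}\sum_{i=1}^M f(\lambda_i)\ge f\!\left(\frac{1}{M}\sum_{i=1}^M\lambda_i\right)=f\!\left(\frac{P_\tau T_\tau}{M}\right),
\end{align}
and strict convexity forces equality exactly when all $\lambda_i = P_\tau T_\tau/M$. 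Equal eigenvalues mean $\boldsymbol{\Lambda}$ is a multiple of the identity, so $\mathbf{X}_\tau\mathbf{X}_\tau^H=\mathbf{U}\boldsymbol{\Lambda}\mathbf{U}^H=\frac{P_\tau T_\tau}{M}\mathbf{I}_M$, which is \eqref{eq:optimal_pilot}. Substituting back and using $\rho_\tau=P_\tau/\sigma_z^2$ gives $\mathrm{MSE}^*=NM/(1+\rho_\tau T_\tau/M)$, which is \eqref{eq:mse_opt}.

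It remains to check that this optimum can actually be achieved. A matrix $\mathbf{X}_\tau\in\mathbb{C}^{M\times T_\tau}$ with orthogonal rows of equal norm exists precisely when $T_\tau\ge M$, which is the condition in the Remark above. One example is $\sqrt{P_\tau T_\tau/M}$ times the first $M$ rows of a $T_\tau\times T_\tau$ unitary DFT matrix. This feasibility check is the only point that needs care; the optimization itself is elementary once it is reduced to eigenvalues. If $T_\tau<M$, the matrix $\mathbf{X}_\tau\mathbf{X}_\tau^H$ has rank at most $T_\tau$, so some $\lambda_i$ must be zero and the bound cannot be attained.
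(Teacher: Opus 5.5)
Your proposal is correct and follows essentially the same route as the paper, which diagonalizes the (scaled) Gram matrix $\frac{1}{T_\tau}\mathbf{X}_\tau\mathbf{X}_\tau^H$, reduces the MSE to $N\sum_i 1/(1+T_\tau\lambda_i/\sigma_z^2)$ under a sum-power constraint, and uses convexity/Jensen to force equal eigenvalues. You also argue that the power constraint is active and show achievability via the scaled partial DFT construction (which needs $T_\tau\ge M$); the paper leaves both of these details implicit.
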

		\begin{IEEEproof}
			Please refer to Appendix \ref{app:A} .
		\end{IEEEproof}

		\subsubsection{Communication Metric}
		
		According to the properties of LMMSE estimation, the channel matrix can be decomposed as $\mathbf{H} = \hat{\mathbf{H}} + \tilde{\mathbf{H}}$, where the estimate $\hat{\mathbf{H}}$ and the estimation error $\tilde{\mathbf{H}}$ are independent of each other. Consequently, the received signal during the data phase can be rewritten as
		\begin{align}
			\mathbf{Y}_d &= \hat{\mathbf{H}}\mathbf{X}_d + \underbrace{\tilde{\mathbf{H}}\mathbf{X}_d + \mathbf{Z}_d}_{\mathbf{Z}_{\text{eff}}}, \label{eq:effective_model}
		\end{align}
		where the effective noise $\mathbf{Z}_{\text{eff}}$ comprises the AWGN $\mathbf{Z}_d$ and the interference term $\tilde{\mathbf{H}}\mathbf{X}_d$ resulting from the channel estimation error $\tilde{\mathbf{H}}$. Since $\mathbf{X}_d$, $\tilde{\mathbf{H}}$, and $\mathbf{Z}_d$ are mutually independent, it is straightforward to show that the covariance matrix of $\mathbf{Z}_{\text{eff}}$ is given by $\mathbf{R}_{Z_{\text{eff}}} = (P_d\sigma_{\tilde{H}}^2 + \sigma_z^2)\mathbf{I}_N$, where $\sigma_{\tilde{H}}^2$ denotes the estimation error variance of a single channel element.
		
		Assume that the data signals employ i.i.d. Gaussian signaling, meaning that $\mathbf{R}_{X_d} := \frac{1}{T_d}\mathbb{E}[\mathbf{X}_d \mathbf{X}_d^H] = \frac{P_d}{M}\mathbf{I}_M$. According to the worst-case uncorrelated additive noise theorem \cite{hassibi2003training}, by treating $\mathbf{Z}_{\mathrm{eff}}$ as an equivalent independent Gaussian noise with the same covariance matrix, a lower bound on the ergodic achievable rate can be obtained as
		\begin{align}
			R &= \frac{T - T_\tau}{T} I \left( \mathbf{X}_d; \mathbf{Y}_d \;\middle|\; \hat{\mathbf{H}} \right) \notag \\
			&\overset{(a)}{\ge} \frac{T - T_\tau}{T} \mathbb{E}_{\hat{\mathbf{H}}} \left[ \log_2 \det \left( \mathbf{I}_N + \mathbf{R}_{Z_{\text{eff}}}^{-1} \hat{\mathbf{H}} \mathbf{R}_{X_d} \hat{\mathbf{H}}^H \right) \right] \notag \\
			&= \frac{T - T_\tau}{T} \mathbb{E}_{\hat{\mathbf{H}}} \left[ \log_2 \det \left( \mathbf{I}_N + \frac{1}{P_d\sigma_{\tilde{H}}^2 + \sigma_z^2} \hat{\mathbf{H}} \left(\frac{P_d}{M}\mathbf{I}_M\right) \hat{\mathbf{H}}^H \right) \right] \notag \\
			&\overset{(b)}{=} \frac{T - T_\tau}{T} \mathbb{E}_{\bar{\mathbf{H}}} \left[ \log_2 \det \left( \mathbf{I}_N + \rho_{\text{eff}} \frac{\bar{\mathbf{H}}\bar{\mathbf{H}}^H}{M} \right) \right], \label{eq:rate_final}
		\end{align}
		where inequality (a) leverages the worst-case Gaussian noise bound; in equality (b), $\bar{\mathbf{H}} := \hat{\mathbf{H}}/\sigma_{\hat{H}}$ is the normalized channel estimate matrix with its entries being i.i.d. $\sim\mathcal{CN}(0,1)$, $\sigma_{\tilde{H}}^2 = (1 + \rho_\tau T_\tau/M)^{-1}$ is the variance of the LMMSE estimation error, and $\sigma_{\hat{H}}^2 = 1 - \sigma_{\tilde{H}}^2$. The effective SNR $\rho_{\mathrm{eff}}$ is defined as
		\begin{align}
			\rho_{\text{eff}} := \frac{\rho_d \rho_\tau T_\tau / M}{1 + \rho_d + \rho_\tau T_\tau / M}. \label{eq:rho_eff}
		\end{align}
		
		For the sake of simplicity, this achievable lower bound will be referred to as the ergodic communication rate in the remainder of this paper.

		\section{Pilot Sensing Scheme}\label{sec:pilot_sensing}
		
		In this section, we consider a baseline scheme that utilizes only pilots for sensing. Under a given sensing distortion constraint and power budget constraint, we aim to maximize the communication rate by jointly optimizing the pilot duration and the power allocation between pilots and data. Specifically, we first formulate the optimization problem, and subsequently derive the $R(D)$ functions for two distinct scenarios: optimal power allocation and equal power allocation.
		
		\subsection{Problem Formulation}
		
		In the PS scheme, the Rx utilizes only the observed signal $\mathbf{Y}_\tau$ to estimate $\mathbf{H}$ for target sensing. According to Proposition \ref{prop:optimal_pilot} in Section \ref{subsec:performance_metric}, the minimum sensing MSE achieved by employing the optimal orthogonal pilots is
		\begin{align}
			\mathrm{MSE} = \frac{NM}{1 + \rho_\tau T_\tau / M}.
		\end{align}
		Subject to the sensing distortion constraint $D$ and the total power budget, the optimization problem is formulated as follows\footnote{To facilitate subsequent derivations, this paper expresses the transmit power allocation in terms of the normalized quantities $\rho_\tau = P_\tau/\sigma_z^2$, $\rho_d = P_d/\sigma_z^2$, and $\rho = P/\sigma_z^2$.}:
		\begin{subequations}\label{eq:P1}
			\begin{align}
				(\mathcal{P}1): \max_{T_\tau, \rho_\tau, \rho_d} \quad 
				&  \frac{T - T_\tau}{T} \mathbb{E} \left[ \log_2 \det \left( \mathbf{I}_N 
				+ \frac{\rho_{\text{eff}}}{M} \bar{\mathbf{H}}\bar{\mathbf{H}}^H \right) \right] 
				\label{eq:obj_P1} \\
				\text{s.t.} \quad 
				& \frac{NM}{1 + \rho_\tau T_\tau / M} \le D, \label{eq:P1_sensing} \\
				& \rho_\tau T_\tau + \rho_d(T - T_\tau) \le \rho T, \label{eq:P1_energy} \\
				& T_\tau \ge M, \quad \rho_\tau \ge 0, \quad \rho_d \ge 0, \label{eq:cons_boundary}
			\end{align}
		\end{subequations}
		where $\rho_{\mathrm{eff}}$ is defined in \eqref{eq:rho_eff}. Here, \eqref{eq:P1_sensing} and \eqref{eq:P1_energy} represent the sensing distortion constraint and the total power budget constraint, respectively.
		
		\subsection{$R(D)$ with Optimal Power Allocation}
		
		For problem $(\mathcal{P}1)$, when the pilot and data powers are optimized separately, the optimal pilot duration admits the following concise analytical characterization.
		\begin{proposition}\label{prop:optimal_time}
			When the pilot and data powers are optimized separately, the optimal pilot duration satisfies $T_\tau^* = M$.
		\end{proposition}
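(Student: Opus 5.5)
The idea is to change variables so that the sensing constraint no longer involves $T_\tau$, and then show that for fixed pilot energy the rate is nonincreasing in $T_\tau$. Introduce the pilot energy $E_\tau := \rho_\tau T_\tau$ and the data length $T_d = T - T_\tau$. In these variables the sensing constraint \eqref{eq:P1_sensing} reads $NM/(1+E_\tau/M)\le D$, so it involves only $E_\tau$. Likewise \eqref{eq:rho_eff} gives $\rho_{\mathrm{eff}} = \rho_d b/(1+\rho_d+b)$ with $b := E_\tau/M$, which involves only $(E_\tau,\rho_d)$. The power constraint becomes $E_\tau + \rho_d T_d \le \rho T$. Thus $T_\tau$ enters $(\mathcal{P}1)$ only through the prefactor $T_d/T$ and through the feasible range of $\rho_d$. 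The feasible set of $E_\tau$ does not depend on $T_\tau$, so I will fix any feasible $E_\tau$ and show that the optimal value over $\rho_d$ is nonincreasing in $T_\tau$. Maximizing over $E_\tau$ then preserves this conclusion.

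\textbf{Step 1 (power constraint is active).} Define $C(s) := \mathbb{E}\left[\log_2\det\left(\mathbf{I}_N + \frac{s}{M}\bar{\mathbf{H}}\bar{\mathbf{H}}^H\right)\right]$. The function $\log\det(\mathbf{I}+s\mathbf{A})$ is increasing and concave in $s\ge 0$ for $\mathbf{A}\succeq 0$, and taking expectations preserves both properties. Hence $C$ is increasing and concave with $C(0)=0$. Since $\rho_{\mathrm{eff}}$ is increasing in $\rho_d$ for fixed $b$, the optimum takes $\rho_d = c/T_d$ with $c := \rho T - E_\tau \ge 0$. (If $c = 0$ the rate is $0$ for every $T_\tau$, and there is nothing to prove.)

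\textbf{Step 2 (monotonicity in $T_d$).} Substituting $\rho_d = c/T_d$, the objective becomes $\frac{1}{T}\,T_d\, C\!\left(\frac{cb}{T_d(1+b)+c}\right)$. Set $x := 1/T_d$. The objective equals $\frac{1}{T}\, h(x)/x$ with $h(x) := C(\varphi(x))$ and $\varphi(x) := \frac{cb\,x}{(1+b)+cx}$.

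The function $\varphi$ is increasing and concave on $x\ge 0$ with $\varphi(0)=0$. Its concavity follows from the form $\alpha x/(\beta+\gamma x)$ with positive constants. Composing the increasing concave $C$ with the concave $\varphi$ gives a concave $h$ with $h(0)=0$. For a concave function vanishing at the origin, the chord slope $h(x)/x$ is nonincreasing in $x>0$. Therefore the objective is nondecreasing in $T_d$, that is, nonincreasing in $T_\tau$.

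\textbf{Step 3 (conclusion).} The constraint \eqref{eq:cons_boundary} requires $T_\tau \ge M$. For every feasible $E_\tau$, the best choice is therefore the smallest admissible value $T_\tau = M$. The corresponding $\rho_\tau = E_\tau/M$ is feasible, so $T_\tau^* = M$.

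The main obstacle I expect is Step 2: verifying cleanly that the composite $T_d\, C(\rho_{\mathrm{eff}}(c/T_d))$ is monotone. The reparametrization $x = 1/T_d$ is what I will try first, since it reduces the claim to concavity of $C\circ\varphi$. If that route stalls, the fallback is to differentiate directly in $T_d$ and use the inequality $C(s)\ge s\,C'(s)$, which also follows from concavity of $C$ and $C(0)=0$.
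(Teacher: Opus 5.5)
Your proposal is correct and is essentially the paper's argument in Appendix~\ref{app:B}: fix the pilot energy $\rho_\tau T_\tau$, make the power constraint bind so that $\rho_d = (\rho T - \rho_\tau T_\tau)/T_d$, and use concavity of $h$ with $h(0)=0$ to show that $T_d\, h(\cdot)$ increases in $T_d$. The paper differentiates and invokes $h(\rho_d) > \rho_d h'(\rho_d)$ (your stated fallback), which is the same fact as your chord-slope monotonicity in $x = 1/T_d$, since $\rho_d = c x$; the remaining differences are minor: you treat every feasible pilot energy instead of assuming the sensing constraint is tight, and you obtain weak rather than strict monotonicity (strictness follows from strict concavity of $\varphi$ when $b, c > 0$).
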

		\begin{IEEEproof}
			Please refer to Appendix \ref{app:B} .
		\end{IEEEproof}
		\begin{remark}
			The sensing accuracy depends solely on the total power expended during the pilot phase, denoted as $\rho_\tau T_\tau$. In contrast, the communication rate scales linearly with respect to time, but only logarithmically with respect to power. Therefore, under the premise of satisfying the sensing constraint, the system should compress the pilot duration to its minimum allowable value, $M$, and compensate for the required sensing power by increasing $\rho_\tau$, thereby allocating more temporal resources to data transmission.
		\end{remark}
		\begin{remark}
			When the sensing distortion is minimized to its theoretical lower bound $D = \frac{NM^2}{\rho T + M}$, which corresponds to allocating all available power to the pilots, the communication rate becomes zero. In this case, the problem degrades to pure channel estimation. As long as the total pilot power remains constant, any configuration satisfying $T_\tau \geq M$ can achieve the same minimum sensing distortion. Consequently, $T_\tau = M$ is not the unique optimal solution in this scenario.
		\end{remark}
		
		By substituting Proposition \ref{prop:optimal_time} into problem $(\mathcal{P}1)$ and setting $T_\tau = M$, the original problem can be simplified into a power allocation problem between pilots and data, formulated as follows:
		\begin{subequations}\label{eq:P2}
			\begin{align}
				(\mathcal{P}2): \max_{\rho_\tau, \rho_d} \quad 
				&  \frac{T - M}{T} \mathbb{E} \left[ \log_2 \det \left( \mathbf{I}_N 
				+ \frac{\rho_{\mathrm{eff}}}{M} \bar{\mathbf{H}}\bar{\mathbf{H}}^H \right) \right] 
				\label{eq:obj_P2} \\
				\text{s.t.} \quad 
				& \frac{NM}{1 + \rho_\tau} \le D, \label{eq:P2_sensing} \\
				& \rho_\tau M + \rho_d (T - M) \le \rho T, \label{eq:P2_energy} \\
				& \rho_\tau \ge 0, \quad \rho_d \ge 0, \label{eq:P2_boundary}
			\end{align}
		\end{subequations}
		where
		\begin{align}
			\rho_{\mathrm{eff}}
			= \frac{\rho_d \rho_\tau}{1 + \rho_d + \rho_\tau}.
			\label{eq:rho_eff_P2}
		\end{align}
		Since the objective function is monotonically increasing with respect to $\rho_{\mathrm{eff}}$, and $\rho_{\mathrm{eff}}$ is monotonically increasing with respect to both $\rho_\tau$ and $\rho_d$, the constraints \eqref{eq:P2_sensing} and \eqref{eq:P2_energy} must hold with equality at the optimal solution. As a result, the two power variables are uniquely determined by the distortion $D$:
		\begin{align}\label{eq:power_D}
			\rho_\tau(D) = \frac{NM-D}{D}, \qquad \rho_d(D) = \frac{\rho T - M \rho_\tau(D)}{T-M}.
		\end{align}
		Substituting \eqref{eq:power_D} into \eqref{eq:rho_eff_P2} yields the $R(D)$ function for the PS scheme. 
		By analyzing the concavity and extrema of $R(D)$, we establish the following theorem.
		\begin{theorem}\label{thm:Rate_Distortion}
			Under optimal power allocation, the $R(D)$ function for the PS scheme is given by
			\begin{align}
				R(D)
				= \frac{T - M}{T}\,
				\mathbb{E}\!\left[
				\log_2\det\!\left(
				\mathbf{I}_N
				+ \frac{\rho_{\mathrm{eff}}(D)}{M}
				\bar{\mathbf{H}}\bar{\mathbf{H}}^H
				\right)
				\right],
				\label{eq:CD_pilot}
			\end{align}
			where
			\begin{align}
				\rho_{\mathrm{eff}}(D)
				= \frac{(NM - D)\!\left[D(\rho T + M) - NM^2\right]}
				{D\!\left[D(\rho T + M) + NM(T - 2M)\right]}.
				\label{eq:rho_eff_D}
			\end{align}
			The effective domain of $R(D)$ is $D \in [D_{\min}^{(1)},\, D^{(1)*}]$. Within this interval, $R(D)$ is strictly monotonically increasing and concave with respect to $D$, where
			\begin{align}
				D_{\min}^{(1)} &= \frac{NM^2}{\rho T + M}, \label{eq:D_min}\\
				D^{(1)*} &= \frac{NM^2}{T(1+\rho)}\left[1 + \sqrt{\frac{(T(1+\rho)-M)(T-M)}{M(\rho T+M)}}\right]. \label{eq:D_star}
			\end{align}
		\end{theorem}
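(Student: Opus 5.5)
Substituting \eqref{eq:power_D} into \eqref{eq:rho_eff_P2} gives the closed form \eqref{eq:rho_eff_D}. With $\rho_\tau = (NM-D)/D$ we have $1+\rho_\tau = NM/D$. Writing $\rho_d = [\rho T D - M(NM-D)]/[D(T-M)] = [D(\rho T+M) - NM^2]/[D(T-M)]$, we get
\begin{align*}
1+\rho_\tau+\rho_d = \frac{NM(T-M) + D(\rho T+M) - NM^2}{D(T-M)} = \frac{D(\rho T+M)+NM(T-2M)}{D(T-M)}.
\end{align*}
Forming $\rho_d\rho_\tau/(1+\rho_\tau+\rho_d)$ then cancels the factor $D(T-M)$ and yields \eqref{eq:rho_eff_D}. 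The endpoint $D_{\min}^{(1)}$ is where $\rho_d(D)=0$, i.e.\ all energy goes to the pilots. Below it, the energy constraint \eqref{eq:P2_energy} together with \eqref{eq:P2_sensing} is infeasible. This matches the preceding remark.

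For the upper endpoint, I will argue as follows. The map $\rho_{\mathrm{eff}}\mapsto \mathbb{E}[\log_2\det(\mathbf{I}_N+\rho_{\mathrm{eff}}\bar{\mathbf{H}}\bar{\mathbf{H}}^H/M)]$ is strictly increasing and concave, so everything reduces to the scalar function $g(D):=\rho_{\mathrm{eff}}(D)$. Relaxing the sensing constraint is only useful while the unconstrained optimum of $\rho_{\mathrm{eff}}$ has not been reached. Hence I will maximize $g$ over the energy line $M\rho_\tau + (T-M)\rho_d = \rho T$. This is the classical problem of \cite{hassibi2003training}. It is most easily done in the variable $x=1+\rho_\tau = NM/D$: one gets $g$ as a ratio of a concave quadratic in $x$ over a linear function of $x$. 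Setting the derivative to zero gives a quadratic equation whose admissible root is
\begin{align*}
x^* = \frac{T(1+\rho)}{M}\Big/\left[1+\sqrt{\tfrac{(T(1+\rho)-M)(T-M)}{M(\rho T+M)}}\right],
\end{align*}
equivalently $D^{(1)*}=NM/x^*$ as in \eqref{eq:D_star}. I expect verifying this identity to require rationalizing the root, i.e.\ multiplying by the conjugate. For $D>D^{(1)*}$ the constraint \eqref{eq:P2_sensing} is inactive and $R$ stays flat, so the effective domain is $[D_{\min}^{(1)},D^{(1)*}]$. On this interval $g'(D)>0$ because $x>x^*$ lies on the increasing side of the unimodal function. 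This gives strict monotonicity of $R$.

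Concavity is the main obstacle. I will write $R(D) = \phi(g(D))$ with $\phi$ increasing and concave. Then
\begin{align*}
R'' = \phi''(g)\,(g')^2 + \phi'(g)\, g'',
\end{align*}
so it suffices to show $g''(D)\le 0$ on the interval. I will work in $x=NM/D$, where $g$ takes the form
\begin{align*}
g = \frac{(x-1)(a-x)}{b x + c}\cdot \text{const},
\end{align*}
with $a=(\rho T+M)/M$, $b=M$ and $c=T-2M$, up to scaling. Then I compute $d^2g/dD^2$ via $dx/dD=-x^2/(NM)$:
\begin{align*}
g_{DD} = g_{xx}\,x^4/(NM)^2 + g_x\,2x^3/(NM)^2.
\end{align*}
We have $g_x\le 0$ for $x\ge x^*$, and $g_{xx}<0$ since the function is a concave-quadratic-over-positive-linear with positive denominator; a direct check or the fact that $h(x)/(bx+c)$ has sign of $g_{xx}$ equal to that of $-(\ldots)^3$-type expressions should give this. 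Together these give $g_{DD}<0$, hence concavity of $R(D)$. If the sign of $g_{xx}$ proves delicate, I will instead differentiate \eqref{eq:rho_eff_D} directly in $D$ by partial fractions. In that form $g(D) = \alpha - \beta D - \gamma/D - \delta/(D+e)$ with positive $\gamma,\delta,e$, which is visibly concave on $D>0$.
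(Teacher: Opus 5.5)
Your skeleton is the paper's Appendix~C written in the shifted variable $x=1+\rho_\tau=NM/D$. It has the same three steps: the closed form by substitution, the maximizer of $\rho_{\mathrm{eff}}$ on the energy line, and then $g_{DD}=\bigl(x^4g_{xx}+2x^3g_x\bigr)/(NM)^2$ with $g_x\le 0$ on $[D_{\min}^{(1)},D^{(1)*}]$. Your closed form, your $D_{\min}^{(1)}$ and your $x^*$ are correct. Rationalizing the paper's $\rho_\tau^*$ gives exactly your $x^*$, which has the bonus of not degenerating at $T=2M$.

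\textbf{The gap.} The problem is the step you flag yourself, $g_{xx}<0$, where your justification does not hold up. Substituting $D=NM/x$ into \eqref{eq:rho_eff_D} gives
\[
g(x)=\frac{M(x-1)(a-x)}{L(x)},\qquad L(x)=(T-2M)x+\rho T+M,
\]
with your $a=(\rho T+M)/M$. So the denominator is not $Mx+T-2M$, and its slope $T-2M$ can be negative.

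Moreover, ``concave quadratic over a positive linear function'' is not concave in general; for example, $(10-x^2)/x$ is convex on $[1,2]$. The sign of $g_{xx}$ is the sign of the numerator evaluated at the zero of $L$. Here that zero lies outside $[1,a]$, because $L(1)=T(1+\rho)-M>0$ and $L(a)=a(T-M)>0$. Explicitly,
\[
g_{xx}=-\frac{2Ma(T-M)\bigl(T(1+\rho)-M\bigr)}{L(x)^3}<0,
\]
which is the paper's $f''=-2A(MA+BP)/(A+B\rho_\tau)^3$ with $MA+BP=(T-M)(\rho T+M)>0$. With this line inserted, your argument is complete and coincides with the paper's.

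\textbf{The fallback.} Discard the partial-fraction fallback, because the concavity it claims is false. Numerator and denominator of \eqref{eq:rho_eff_D} are both quadratic in $D$, so $g\to-1$ as $D\to\infty$ and there is no $-\beta D$ term. For $T>2M$ the decomposition is
\[
g(D)=-1-\frac{NM^2}{(T-2M)D}+\frac{\delta}{D+e},\qquad \delta,e>0,
\]
and the last term is convex.

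In fact $g$ is not concave on $D>0$. It decreases strictly past $D^{(1)*}$ yet tends to $-1$, which no concave function can do. It can be convex even inside the feasible range: for $M=N=1$, $T=10$, $\rho=1$, $g(D)=(1-D)(11D-1)/[D(11D+8)]$ has $g''(1)=-\tfrac{22}{19}+\tfrac{600}{361}>0$.

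Concavity is asserted only on $[D_{\min}^{(1)},D^{(1)*}]$ precisely because it relies on $g_x\le 0$ there. The chain-rule route is therefore the one to keep.
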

		\begin{IEEEproof}
			please refer to Appendix \ref{app:C}.
		\end{IEEEproof}
		\begin{remark}
			When all power is allocated to the pilots, the minimum sensing distortion $D_{\min}^{(1)}=\frac{NM^2}{\rho T+M}$ is achieved. Since no power remains for data transmission, the corresponding communication rate is zero. On the other hand, allocating zero power to the pilots yields the maximum sensing distortion $D_{\max}^{(1)}=NM$.
			In this case, the channel estimate
			$\hat{\mathbf{H}}$ degenerates to an all-zero matrix, and consequently the achievable rate in \eqref{eq:rate_final} also becomes zero. Theorem \ref{thm:Rate_Distortion} indicates that the operationally meaningful range of the sensing distortion $D$ is  $[D_{\min}^{(1)},\, D^{(1)*}]$. Targeting a  distortion level $D > D^{(1)*}$ by intentionally lowering the pilot power leads to a severe deterioration in channel estimation accuracy, while the resulting increase in data transmission power is insufficient to compensate for this loss, ultimately reducing the communication rate. In other words, the distortion constraint becomes inactive when $D > D^{(1)*}$. In this regime, it is optimal to operate at the sensing distortion level $D^{(1)*}$;
			further relaxing the distortion constraint provides no additional communication-rate gain.
		\end{remark}
		
		

		\subsection{$R(D)$ with Equal Power Allocation}
		
		Under  equal power allocation, i.e., $\rho_\tau = \rho_d = \rho$, the total power constraint \eqref{eq:P1_energy} is automatically satisfied. The only remaining optimization variable is the pilot duration $T_\tau$. Thus, problem $(\mathcal{P}1)$ simplifies to:
		\begin{subequations}\label{eq:P3}
			\begin{align}
				(\mathcal{P}3): \max_{T_\tau} \quad 
				&  \frac{T - T_\tau}{T} \mathbb{E} \left[ \log_2 \det \left( \mathbf{I}_N + \frac{\rho_{\text{eff}}(T_\tau)}{M} \bar{\mathbf{H}}\bar{\mathbf{H}}^H \right) \right] \label{eq:obj_P3} \\
				\text{s.t.} \quad 
				& \frac{NM}{1 + \rho T_\tau / M} \le D, \label{eq:P3_sensing} \\
				& M \le T_\tau \le T, \label{eq:P3_time_bound}
			\end{align}
		\end{subequations}
		where 
		\begin{align}
			\rho_{\text{eff}}(T_\tau) = \frac{\rho^2 T_\tau / M}{1 + \rho + \rho T_\tau / M}. \label{eq:rho_eff_Ttau_equal}
		\end{align}
		By combining constraints \eqref{eq:P3_sensing} and \eqref{eq:P3_time_bound}, the feasible region for the sensing distortion $D$ is given by
		\begin{align}
			D \in \left[ \frac{NM^2}{\rho T + M}, \quad \frac{NM}{1 + \rho} \right]. \label{eq:D_range_equal_power}
		\end{align}
		where the minimum distortion $D_{\min}^{(2)}$ corresponds to $T_\tau = T$, meaning that all symbols are used for pilots, resulting in a zero communication rate. Conversely, $D_{\max}^{(2)}$ corresponds to $T_\tau = M$, where the pilot duration takes its minimum allowable value and the communication rate is maximized.
		
		From constraint \eqref{eq:P3_sensing}, $T_\tau$ can be explicitly expressed as a function of $D$:
		\begin{equation}
			T_\tau(D)  = \frac{M(NM - D)}{\rho D}. \label{eq:T_tau_D}
		\end{equation}
		By substituting \eqref{eq:T_tau_D} into \eqref{eq:rho_eff_Ttau_equal}, the closed-form expression for the effective SNR with respect to $D$ is obtained as
		\begin{align}
			\rho_{\text{eff}}(D) 
			= \frac{\rho (NM - D)}{NM + \rho D}. \label{eq:rho_eff_D_equal_power}
		\end{align}
		Based on the above results, the $R(D)$ function under equal power allocation is presented below.
		\begin{theorem}\label{thm:Rate_Distortion_equal_power}
			Under equal power allocation, the $R(D)$ function for the PS scheme is given by
			\begin{align}
				R(D) &= \frac{D(\rho T + M) - NM^2}{\rho T D} \nonumber \\
				&\quad \mathbb{E}\!\left[ \log_2 \det \left( \mathbf{I}_N 
				+ \frac{\rho(NM - D)}{M(NM + \rho D)} \bar{\mathbf{H}}\bar{\mathbf{H}}^H \right) \right]. \label{eq:RD_equal_power}
			\end{align}
			Within the feasible distortion interval $D \in [D_{\min}^{(2)}, D_{\max}^{(2)}]$, there exists a unique $D^{(2)*}$ such that $R(D)$ is a concave function on $[D_{\min}^{(2)}, D^{(2)*}]$, and $R(D^{(2)*}) = \max_{D \in [D_{\min}^{(2)},\, D_{\max}^{(2)}]} R(D)$.
		\end{theorem}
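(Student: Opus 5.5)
The plan is to avoid working with $D$ directly and instead study the objective of $(\mathcal{P}3)$ as a function of the pilot duration $T_\tau$. The statement then follows from the monotone change of variables \eqref{eq:T_tau_D}.

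\textbf{Step 1: properties of the rate functional.} Define
\begin{align*}
f(s) := \mathbb{E}\!\left[\log_2\det\!\left(\mathbf{I}_N + \frac{s}{M}\bar{\mathbf{H}}\bar{\mathbf{H}}^H\right)\right].
\end{align*}
I will show that $f$ is strictly increasing and concave on $s\ge 0$, with $f(0)=0$. For each realization, $\log\det(\mathbf{I}+s\mathbf{A})=\sum_i\log(1+s\lambda_i)$ with $\lambda_i\ge 0$. Each term is increasing and concave in $s$, and the sum is strictly increasing whenever $\bar{\mathbf{H}}\neq 0$, which holds almost surely. Taking expectations preserves these properties. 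The inner exchange of differentiation and expectation is justified by dominated convergence.

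\textbf{Step 2: concavity in $T_\tau$.} Write $\varphi(T_\tau) := u(T_\tau)\,v(T_\tau)$, where $u(T_\tau)=(T-T_\tau)/T$ and $v(T_\tau)=f(\rho_{\rm eff}(T_\tau))$, with $\rho_{\rm eff}$ given by \eqref{eq:rho_eff_Ttau_equal}.
\begin{itemize}
\item The map $T_\tau\mapsto \rho_{\rm eff}(T_\tau)=\frac{cT_\tau}{b+cT_\tau}\rho$, with $c=\rho/M$ and $b=1+\rho$, is strictly increasing and strictly concave.
\item Since $f$ is concave and increasing, $v$ is increasing and concave. It is strictly concave because $f'>0$ and $\rho_{\rm eff}''<0$.
\item The factor $u$ is nonnegative, linear and decreasing on $[M,T]$.
\end{itemize}
Therefore
\begin{align*}
\varphi'' = u''v + 2u'v' + uv'' = 2u'v' + uv'' < 0 \quad\text{on } [M,T).
\end{align*}
So $\varphi$ is strictly concave on $[M,T]$ and has a unique maximizer $T_\tau^\star\in[M,T)$; it is not $T$, since $\varphi(T)=0<\varphi(T_\tau)$ for interior points. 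Moreover, $\varphi$ is nonincreasing on $[T_\tau^\star,T]$.

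\textbf{Step 3: transfer to $D$.} By \eqref{eq:T_tau_D}, $T_\tau(D)=\frac{M(NM-D)}{\rho D}=\frac{NM^2}{\rho D}-\frac{M}{\rho}$ is a strictly decreasing, strictly convex bijection from $[D_{\min}^{(2)},D_{\max}^{(2)}]$ onto $[M,T]$. Substituting it into \eqref{eq:rho_eff_Ttau_equal} and into $u$ reproduces \eqref{eq:RD_equal_power}, so $R(D)=\varphi(T_\tau(D))$. Set $D^{(2)*}:=T_\tau^{-1}(T_\tau^\star)$; this point is unique by strict monotonicity and maximizes $R$ over the feasible interval. The interval $[D_{\min}^{(2)},D^{(2)*}]$ corresponds exactly to $T_\tau\in[T_\tau^\star,T]$, where $\varphi$ is concave and nonincreasing. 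For $D$ in this interval,
\begin{align*}
R''(D)=\varphi''(T_\tau)\,T_\tau'(D)^2+\varphi'(T_\tau)\,T_\tau''(D)\le 0,
\end{align*}
since the first term is negative and in the second $\varphi'\le 0$ while $T_\tau''>0$. This gives concavity on $[D_{\min}^{(2)},D^{(2)*}]$. If $T_\tau^\star=M$, then $D^{(2)*}=D_{\max}^{(2)}$ and the claim holds on the whole interval.

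\textbf{Main obstacle.} The main difficulty is that concavity fails if one differentiates $R$ in $D$ directly. With $g(D)=f(\rho_{\rm eff}(D))$, the map $D\mapsto\rho_{\rm eff}(D)$ is convex and decreasing, so $g''=f''(\rho_{\rm eff}')^2+f'\rho_{\rm eff}''$ has indefinite sign. The reparametrization by $T_\tau$, together with the sign argument for the $\varphi'T_\tau''$ term, is the key device that resolves this. It is also what explains why concavity is asserted only up to $D^{(2)*}$.
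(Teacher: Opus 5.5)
Your proposal is correct and takes essentially the same route as the paper's proof. Both establish strict concavity of the objective in $T_\tau$ from $\varphi''=2u'v'+uv''<0$, take its unique maximizer, and transfer to $D$ via the chain rule, using $\varphi'(T_\tau)\le 0$ on $[T_\tau^\star,T]$ and $T_\tau''(D)>0$. Your version is also slightly more careful than the paper's: you verify the closed form of $R(D)$ explicitly, and you allow the maximizer to sit at the boundary $T_\tau^\star=M$ rather than assuming a stationary point with $R'(T_\tau^\star)=0$.
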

		\begin{IEEEproof}
			please refer to Appendix \ref{app:D}.
		\end{IEEEproof}
		\begin{remark}
			Under the equal power constraint, the stationary point condition for $D^{(2)*}$ (or equivalently, $T_\tau^*$) involves an expectation term that is analytically intractable to simplify. In general, no closed-form expression exists, and the optimal solution is typically determined through a one-dimensional numerical search.
		\end{remark}

		\section{Data-Aided Sensing Scheme}\label{sec:data_assisted_sensing}

		In the DAS scheme, the Rx jointly utilizes the pilot matrix $\mathbf{X}_\tau$ and the data matrix $\mathbf{X}_d$ for sensing. The complete transmitted signal matrix is given by $\mathbf{X} = [\mathbf{X}_\tau,\, \mathbf{X}_d]$. The corresponding sensing MSE is 
		\begin{align}
			\operatorname{MSE} 
			&= \mathbb{E}_{\mathbf{X}_d}\left[ \operatorname{Tr}\left( \left( \mathbf{I}_{M} + \frac{1}{\sigma_z^2} (\mathbf{X}\mathbf{X}^H)^* \right)^{-1} \otimes \mathbf{I}_N \right) \right] \notag \\
			&= N \cdot \mathbb{E}_{\mathbf{X}_d}\left[ \operatorname{Tr}\left( \left( \mathbf{I}_M + \frac{1}{\sigma_z^2} (\mathbf{X}_\tau \mathbf{X}_\tau^H + \mathbf{X}_d \mathbf{X}_d^H)^* \right)^{-1} \right) \right] \notag \\
			&\overset{(a)}{=} N \cdot \mathbb{E}_{\mathbf{X}_d}\left[ \operatorname{Tr}\left( \left( \underbrace{\left( 1 + \frac{\rho_\tau T_\tau}{M} \right)}_{ c} \mathbf{I}_M + \frac{\mathbf{X}_d \mathbf{X}_d^H }{\sigma_z^2} \right)^{-1} \right) \right], \label{eq:MSE_average}
		\end{align}
		where $(a)$ assumes that the pilot sequences are orthogonal, and the constant $c := 1 + \rho_\tau T_\tau/M$ characterizes the prior gain provided by the pilots.
		
		It should be pointed out that, due to the inherent randomness of the data matrix $\mathbf{X}_d$, the expectation in \eqref{eq:MSE_average} involves a nonlinear matrix inversion, which makes direct calculation rather difficult. To address this, we first derive a theoretical lower bound for the MSE of the DAS scheme in this section, and subsequently employ RMT to derive its accurate asymptotic analytical expression.

		\subsection{Lower Bound for Sensing Distortion}
		
		Based on Jensen's inequality, a theoretical lower bound on the MSE of DAS can be obtained, as stated in the following proposition.
		\begin{proposition}\label{prop:lower_bound}
			For any finite dimensions $M$ and $T_d$, the sensing MSE of the DAS scheme satisfies:
			\begin{align}
				\operatorname{MSE} \ge  \frac{NM^2}{M + \rho T}. \label{eq:MSE_lower_bound}
			\end{align}
		\end{proposition}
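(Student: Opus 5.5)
The plan is to apply Jensen's inequality to the matrix-convex map $\mathbf{A}\mapsto\operatorname{Tr}(\mathbf{A}^{-1})$ on positive definite matrices, and then to use the power constraint \eqref{eq:snr_constraint}. Starting from \eqref{eq:MSE_average}, define
\begin{align}
f(\mathbf{X}_d) := \operatorname{Tr}\!\left[\left(c\,\mathbf{I}_M + \tfrac{1}{\sigma_z^2}\mathbf{X}_d\mathbf{X}_d^H\right)^{-1}\right].
\end{align}
This is a convex function of the positive definite argument $\mathbf{A} = c\mathbf{I}_M + \mathbf{X}_d\mathbf{X}_d^H/\sigma_z^2$. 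Convexity of $\operatorname{Tr}(\mathbf{A}^{-1})$ follows because the scalar map $x\mapsto x^{-1}$ is operator convex on $(0,\infty)$, and taking the trace preserves convexity. Jensen's inequality then gives
\begin{align}
\mathrm{MSE} \ge N\operatorname{Tr}\!\left[\left(c\,\mathbf{I}_M + \tfrac{1}{\sigma_z^2}\mathbb{E}[\mathbf{X}_d\mathbf{X}_d^H]\right)^{-1}\right].
\end{align}

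Next, the trace of the inverse has to be controlled for a general data covariance, since the bound should hold for any finite $M,T_d$ and any admissible data distribution. Let $\mathbf{B} := c\mathbf{I}_M + \mathbb{E}[\mathbf{X}_d\mathbf{X}_d^H]/\sigma_z^2 \succ 0$, with eigenvalues $\lambda_1,\dots,\lambda_M$. Their sum is
\begin{align}
\sum_{i=1}^M \lambda_i = Mc + \rho_d T_d = M + \rho_\tau T_\tau + \rho_d T_d,
\end{align}
which uses $\mathbb{E}[\operatorname{Tr}(\mathbf{X}_d\mathbf{X}_d^H)] = P_dT_d$. Applying the Cauchy--Schwarz (AM--HM) inequality,
\begin{align}
\sum_i \lambda_i^{-1} \ge \frac{M^2}{\sum_i \lambda_i},
\end{align}
and then the power constraint $\rho_\tau T_\tau + \rho_d T_d \le \rho T$, we obtain
\begin{align}
\mathrm{MSE} \ge \frac{NM^2}{M + \rho_\tau T_\tau + \rho_d T_d} \ge \frac{NM^2}{M+\rho T},
\end{align}
which is exactly \eqref{eq:MSE_lower_bound}. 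Under the i.i.d. Gaussian signaling assumption, $\mathbb{E}[\mathbf{X}_d\mathbf{X}_d^H] = (P_dT_d/M)\mathbf{I}_M$ is already a scaled identity, so the AM--HM step holds with equality and is only needed in the general case.

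The main obstacle is justifying the Jensen step rigorously. This requires the convexity of $\operatorname{Tr}(\mathbf{A}^{-1})$ over the positive definite cone, together with the observation that $\mathbf{X}_d\mathbf{X}_d^H$ enters $\mathbf{A}$ affinely, so $f$ is convex in the random matrix $\mathbf{W} = \mathbf{X}_d\mathbf{X}_d^H$. A concrete route to convexity is to take second derivatives: $\frac{d^2}{dt^2}\operatorname{Tr}[(\mathbf{A}+t\mathbf{E})^{-1}] = 2\operatorname{Tr}(\mathbf{A}^{-1}\mathbf{E}\mathbf{A}^{-1}\mathbf{E}\mathbf{A}^{-1}) \ge 0$ for Hermitian $\mathbf{E}$. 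Alternatively, one can apply the scalar convex function $1/x$ eigenvalue-wise via Peierls-type arguments. A final point to check is that the bound coincides with $D_{\min}^{(1)}$ in Theorem~\ref{thm:Rate_Distortion}, i.e., the all-power-to-sensing limit, which confirms that it is consistent.
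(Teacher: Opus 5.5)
Your proof is correct and follows the paper's own argument: Jensen's inequality applied to the convex map $\operatorname{Tr}(\mathbf{A}^{-1})$, with $\mathbf{X}_d\mathbf{X}_d^H$ entering affinely, and then the power budget $\rho_\tau T_\tau+\rho_d T_d\le\rho T$. Your only addition is the AM--HM step, which drops the paper's assumption of an isotropic data covariance $\mathbb{E}[\mathbf{X}_d\mathbf{X}_d^H]=\frac{P_dT_d}{M}\mathbf{I}_M$, so the bound holds for any data distribution that meets the power constraint.
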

		\begin{IEEEproof}
			Please refer to Appendix \ref{app:E} .
		\end{IEEEproof}
		\begin{remark}
			This lower bound essentially idealizes the random data matrix as a deterministic, orthogonal, and full-rank matrix, i.e., $\frac{1}{T_d}\mathbf{X}_d \mathbf{X}_d^H = \frac{P_d}{M}\mathbf{I}_M$. The inevitable random fluctuations inherent in finite-length data sequences will result in sensing performance degradation. Consequently, this lower bound is generally not sufficiently tight and therefore cannot accurately guide resource allocation design in practical ISAC systems.
		\end{remark}

		\subsection{Asymptotic Expression for Sensing Distortion}
		
		RMT provides a powerful mathematical framework for analyzing the eigenvalue distribution of random matrices, and has been widely applied in various fields such as wireless communications \cite{tulino2004random}, quantum physics \cite{guhr1998random}, and statistical inference \cite{johnstone2006high}. Therefore, in this section, we resort to RMT to derive a closed-form asymptotic expression for the MSE of the DAS scheme.
		
		Let $\mathbf{X}_d = \sqrt{P_d/M}\,\mathbf{G}$, where the entries of $\mathbf{G} \in \mathbb{C}^{M \times T_d}$ are i.i.d. $\sim \mathcal{CN}(0, 1)$. Then, from \eqref{eq:MSE_average}, we have
		\begin{align}
			\operatorname{MSE} &= N \cdot \mathbb{E}_{\mathbf{G}}\left[ \operatorname{Tr}\left( \left(c\mathbf{I}_M + \frac{\rho_d}{M}\mathbf{G}\mathbf{G}^H\right)^{-1} \right) \right]. \label{eq:MSE_G}
		\end{align}
		Define $\mathbf{W} := \frac{1}{M}\mathbf{G}\mathbf{G}^H$ as a standard complex Wishart matrix. Its Stieltjes transform is defined as
		\begin{align}
			m_{\mathbf{W}}(z) &:= \frac{1}{M}\operatorname{Tr}\left( \left(\mathbf{W} - z\mathbf{I}_M\right)^{-1} \right), \quad z \in \mathbb{C} \setminus \mathbb{R}^+. \label{eq:Stieltjes_def}
		\end{align}
		Note that
		\[
		c\mathbf{I}_M + \frac{\rho_d}{M}\mathbf{G}\mathbf{G}^H
		= \rho_d\left(\frac{c}{\rho_d}\mathbf{I}_M + \mathbf{W}\right),
		\]
		thus the MSE can be expressed in terms of the trace of the inverse of $\mathbf{W}$. To facilitate subsequent analysis, we define the normalized trace:
		\begin{align}
			x &:= \frac{1}{M} \mathbb{E}_{\mathbf{G}}\left[ \operatorname{Tr}\left( \left(c\mathbf{I}_M + \frac{\rho_d}{M}\mathbf{G}\mathbf{G}^H\right)^{-1} \right) \right] \notag \\
			&= \frac{1}{\rho_d} \cdot \frac{1}{M} \mathbb{E}_{\mathbf{W}}\left[ \operatorname{Tr}\left( \left(\frac{c}{\rho_d}\mathbf{I}_M + \mathbf{W}\right)^{-1} \right) \right] \notag \\
			&= \frac{1}{\rho_d} m_{\mathbf{W}}\left(-\frac{c}{\rho_d}\right), \label{eq:x_def}
		\end{align}
		and the sensing MSE can be written as $\operatorname{MSE} = NMx$.
		
		As $M, T_d \to \infty$ while the ratio $\gamma := M/T_d$ remains finite, the empirical spectral distribution of $\mathbf{W}$ converges almost surely to the Marchenko-Pastur law \cite{tse1999linear}, and its Stieltjes transform almost surely satisfies
		\begin{align}
			m_{\mathbf{W}}(z) &= \frac{1}{-z + \gamma\left(1 + m_{\mathbf{W}}(z)\right)^{-1}}. \label{eq:SC_equation}
		\end{align}
		Substituting $z = -c/\rho_d$ into \eqref{eq:SC_equation} and simplifying the resulting expression, we obtain a quadratic equation with respect to $x$:
		\begin{align}
			c\rho_d x^2 + b x - 1 &= 0,  \label{eq:quadratic}
		\end{align}
		where $b := c + \rho_d T_d/M - \rho_d$. Since $x$ corresponds to the trace of a positive definite matrix, taking the positive root yields
		\begin{align}
			x^* &= \frac{-b + \sqrt{b^2 + 4c\rho_d}}{2c\rho_d}. \label{eq:x_star}
		\end{align}
		Consequently, we arrive at the following proposition.
		\begin{proposition}\label{prop:MSE_asymp}
			As $M, T_d \to \infty$ with $\gamma = M/T_d$, the MSE of the DAS scheme converges almost surely to
			\begin{align}
				\frac{NM}{2c\rho_d} \left[\sqrt{\left(c + \frac{\rho_d T_d}{M} - \rho_d\right)^2 + 4c\rho_d} - \left(c + \frac{\rho_d T_d}{M} - \rho_d\right)\right], \label{eq:MSE_asymp}
			\end{align}
			where $c = 1 + \rho_\tau T_\tau/M $.
		\end{proposition}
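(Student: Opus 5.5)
The plan is to reduce the statement to the Marchenko–Pastur law for a sample covariance matrix with the standard $1/T_d$ normalization, then solve the resulting fixed-point equation at a point on the negative real axis. By \eqref{eq:MSE_G} and \eqref{eq:x_def}, $\operatorname{MSE}=NMx$ with $x=\frac{1}{M}\mathbb{E}\operatorname{Tr}\bigl(c\mathbf{I}_M+\frac{\rho_d}{M}\mathbf{G}\mathbf{G}^H\bigr)^{-1}$. Since the factor $NM$ grows, ``converges'' will be read as the usual deterministic-equivalent statement: the normalized quantity $x$, and also the random normalized trace before taking the expectation, converge to $x^*$ in \eqref{eq:x_star}.

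\textbf{Step 1 (rescaling).} Write $\frac{\rho_d}{M}\mathbf{G}\mathbf{G}^H=a\mathbf{S}$ with $\mathbf{S}:=\frac{1}{T_d}\mathbf{G}\mathbf{G}^H$ and $a:=\rho_d T_d/M=\rho_d/\gamma$. Then
\[
\frac{1}{M}\operatorname{Tr}\bigl(c\mathbf{I}_M+a\mathbf{S}\bigr)^{-1}=\frac{1}{a}\,s_{\mathbf{S}}\!\left(-\frac{c}{a}\right),
\]
where $s_{\mathbf{S}}$ is the Stieltjes transform of $\mathbf{S}$. This is the same object as \eqref{eq:x_def}; rescaling just lets me use the standard form of the Marchenko–Pastur law for $\mathbf{S}$ with ratio $\gamma=M/T_d$. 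That form says the spectral distribution of $\mathbf{S}$ converges almost surely, and the limiting transform satisfies
\[
s(z)=\frac{1}{1-\gamma-z-\gamma z s(z)} .
\]
This is the rescaled version of \eqref{eq:SC_equation}.

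\textbf{Step 2 (fixed point and root selection).} Put $z=-c/a$ and $s=ax$, then multiply through by $a$. This gives $x\bigl(a-a\gamma+c\bigr)+a\gamma c\,x^2=1$. Using $a\gamma=\rho_d$ and $a-a\gamma=\rho_d T_d/M-\rho_d$, this is exactly \eqref{eq:quadratic} with $b=c+\rho_dT_d/M-\rho_d$. The product of the two roots is $-1/(c\rho_d)<0$, so exactly one root is positive. The limit $x$ must be positive, since it is the normalized trace of the inverse of a positive definite matrix, and it also lies in $(0,1/c]$. Hence $x=x^*$ in \eqref{eq:x_star}, and multiplying by $NM$ gives \eqref{eq:MSE_asymp}.

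\textbf{Step 3 (justifying the limit).} The function $\lambda\mapsto (c+a\lambda)^{-1}$ is bounded and continuous on $[0,\infty)$ for fixed $c>0$. Weak almost-sure convergence of the spectral distribution of $\mathbf{S}$ therefore gives almost-sure convergence of the random normalized trace to $\frac1a s(-c/a)$. The random trace is uniformly bounded by $1/c$, so dominated convergence carries this to the expectation $x$.

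I expect Step 3 to be the main obstacle. Two things need care there. First, $a=\rho_dT_d/M$ stays fixed only because $\gamma$ is fixed, so the evaluation point $-c/a$ is a fixed point outside the support and the convergence is legitimate. Second, the regime $\gamma>1$, where $\mathbf{S}$ has a point mass at zero, must be handled; there the form of the equation in Step 1, which already includes that mass, has to be used so the algebra is consistent. The quadratic algebra itself is routine.
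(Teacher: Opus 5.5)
Your proof is correct and follows essentially the same route as the paper: write the normalized trace as a Stieltjes transform evaluated at a negative real point, apply the Marchenko--Pastur fixed-point equation, reduce it to the quadratic $c\rho_d x^2+bx-1=0$, and keep the unique positive root. Your $1/T_d$ normalization gives a self-consistent fixed-point equation, which the paper's does not: its displayed \eqref{eq:SC_equation} carries $\gamma$ where $T_d/M=1/\gamma$ is needed to produce its $b$. Your Step 3 also supplies the bounded-test-function and dominated-convergence justification that the paper leaves implicit.
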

		\begin{remark}
			Unlike the Jensen's lower bound derived in Proposition \ref{prop:lower_bound}, \eqref{eq:MSE_asymp} fully captures the spectral fluctuation effects of $\mathbf{X}_d$. Specifically, when $T_d$ is finite, $\sqrt{b^2 + 4c\rho_d} > b$, implying that the asymptotic MSE is strictly greater than the lower bound. The difference between the two precisely quantifies the sensing performance degradation introduced by the randomness. Conversely, when $T_d \to \infty$, $\gamma \to 0$, leading to $b \approx \rho_d / \gamma \to +\infty$. By performing a first-order Taylor expansion on the radical term in \eqref{eq:MSE_asymp}, i.e., $\sqrt{b^2 + 4c\rho_d} = b + 2c\rho_d/b + \mathcal{O}(1/b^2)$, and substituting it back, the asymptotic MSE is approximated as $\approx NM^2/(\rho_d T_d) \approx NM^2/(M+\rho T)$, which degrades exactly to the Jensen's lower bound.
		\end{remark}

		\subsection{Problem Formulation}
		
		According to Proposition \ref{prop:MSE_asymp}, the asymptotic MSE of the DAS scheme can be characterized by the following closed-form expression:
		\begin{equation}
			D_{\mathrm{RMT}} = \frac{NM}{2c\rho_d} 
			\Bigl[ \sqrt{b^{2} + 4c\rho_d} - b \Bigr].
			\label{eq:D_RMT_def}
		\end{equation}
		Subject to the sensing distortion constraint $D$ and the total power budget constraint, the optimization problem is formulated as follows:
		\begin{subequations}\label{eq:P_DA}
			\begin{align}
				(\mathcal{P}4): \max_{T_\tau, \rho_\tau, \rho_d} \quad 
				&  \frac{T - T_\tau}{T} \mathbb{E} \left[ \log_2 \det \left( \mathbf{I}_N + \frac{\rho_{\text{eff}}}{M} \bar{\mathbf{H}}\bar{\mathbf{H}}^H \right) \right] \label{eq:obj_P_DA} \\
				\text{s.t.} \quad 
				&  D_{\mathrm{RMT}} \le D, \label{eq:cons_DA_sensing} \\
				&  \rho_\tau T_\tau + \rho_d(T - T_\tau) \le \rho T, \label{eq:cons_DA_energy} \\
				&  T_\tau \ge M, \quad \rho_\tau \ge 0, \quad \rho_d \ge 0, \label{eq:cons_DA_boundary}
			\end{align}
		\end{subequations}
		where $\rho_{\mathrm{eff}}$ is defined in \eqref{eq:rho_eff}. Here, constraint \eqref{eq:cons_DA_sensing} represents the sensing distortion constraint, and constraint \eqref{eq:cons_DA_energy} denotes the total power budget.

		\subsection{$R(D)$ with Optimal Power Allocation}
		
		For problem $(\mathcal{P}4)$, when the pilot and data powers are optimized separately, the optimal pilot duration admits the following analytical characterization.
		\begin{proposition}\label{prop:DA_optimal_time}
			When the pilot and data powers are optimized separately, the optimal pilot duration satisfies $T_\tau^* = M$.
		\end{proposition}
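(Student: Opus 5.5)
The plan is an exchange argument. Starting from any feasible point of $(\mathcal{P}4)$ with $T_\tau > M$, I will build a feasible point with $T_\tau = M$ whose rate is no smaller. The construction keeps the total pilot energy and the total data energy unchanged and only redistributes time.

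Fix a feasible triple $(T_\tau,\rho_\tau,\rho_d)$ with $T_\tau>M$, and write $E_\tau:=\rho_\tau T_\tau$ and $E_d:=\rho_d(T-T_\tau)$. Define the new point
\begin{align*}
T_\tau' = M,\qquad \rho_\tau' = E_\tau/M,\qquad \rho_d' = E_d/(T-M)\le\rho_d .
\end{align*}

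\emph{Power constraint.} The energy budget \eqref{eq:cons_DA_energy} is met with the same total $E_\tau+E_d$.

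\emph{Pilot quantities.} The pilot prior gain $c=1+E_\tau/M$ is unchanged. The pilot term $q:=\rho_\tau T_\tau/M$ inside $\rho_{\mathrm{eff}}$ in \eqref{eq:rho_eff} is also unchanged.

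The two things left to check are that $D_{\mathrm{RMT}}$ does not increase and that the rate does not decrease.

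\textbf{Sensing.} In \eqref{eq:D_RMT_def} I write $b=c+E_d/M-\rho_d$. With $c$ and $E_d$ held fixed, the only remaining variable is $\rho_d$, and $D_{\mathrm{RMT}}=NMx$, where $x$ is the positive root of
\begin{align*}
F(x,\rho_d):=c\rho_d x^2+bx-1=0 .
\end{align*}
Implicit differentiation, using $\partial b/\partial\rho_d=-1$, gives
\begin{align*}
\frac{\partial F}{\partial \rho_d}=x(cx-1),\qquad \frac{\partial F}{\partial x}=2c\rho_d x+b=\sqrt{b^2+4c\rho_d}>0 ,
\end{align*}
so that $dx/d\rho_d=x(1-cx)/\sqrt{b^2+4c\rho_d}$.

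Next I need $x\le 1/c$. This holds because $x$ is the limit of a normalized trace of $(c\mathbf{I}+\tfrac{\rho_d}{M}\mathbf{G}\mathbf{G}^H)^{-1}\preceq c^{-1}\mathbf{I}$. It can also be read off directly from the quadratic, since $F(1/c,\rho_d)=\rho_d T_d/(cM)>0$.

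Hence $dx/d\rho_d\ge 0$. Lowering $\rho_d$ to $\rho_d'$ at fixed $E_d$ therefore does not increase $D_{\mathrm{RMT}}$, and \eqref{eq:cons_DA_sensing} remains satisfied.

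\textbf{Rate.} With $q$ fixed, set $\phi(s):=sq/(1+s+q)$, which is increasing and concave with $\phi(0)=0$. Define
\begin{align*}
F_R(s):=\mathbb{E}\!\left[\log_2\det\!\left(\mathbf{I}_N+\tfrac{\phi(s)}{M}\bar{\mathbf{H}}\bar{\mathbf{H}}^H\right)\right].
\end{align*}
This is a concave increasing function composed with a concave increasing function, so $F_R$ is concave with $F_R(0)=0$. The rate, up to the factor $1/T$, is the perspective $g(t)=t\,F_R(E_d/t)$ with $t=T-T_\tau$. Its derivative is
\begin{align*}
g'(t)=F_R(s)-sF_R'(s),\qquad s=E_d/t,
\end{align*}
which is nonnegative for any concave $F_R$ with $F_R(0)=0$. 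So enlarging the data duration from $T-T_\tau$ to $T-M$ at fixed $E_d$ does not decrease the rate.

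Combining the two parts, any optimum can be moved to $T_\tau=M$ without loss, which gives $T_\tau^*=M$.

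The step I expect to be the main obstacle is the sensing monotonicity, in particular the sign control $cx\le1$ and checking that $\rho_d$ is the only quantity varying in $b$ once $E_d$ is fixed. The boundary case $E_d=0$ is trivial, since then both schemes coincide. Strictness of the improvement, needed for uniqueness, would follow from $g'>0$ whenever $E_d>0$ and $F_R$ is strictly concave.
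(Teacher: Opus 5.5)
Your proposal is correct and follows the paper's own proof of this proposition: it too fixes $E_\tau=\rho_\tau T_\tau$ and $E_d=\rho_d(T-T_\tau)$ and moves to $T_\tau=M$. For the sensing constraint it invokes the monotonicity of $D_{\mathrm{RMT}}$ in $\rho_d$ at fixed $c$ and $E_d$, and for the rate it reuses the tangent-line bound $h(s)-s h'(s)>0$ from the proof of Proposition~\ref{prop:optimal_time}.

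Where you go beyond the paper is the sensing step. Your implicit differentiation together with $F(1/c,\rho_d)=\rho_d T_d/(cM)>0$, hence $cx<1$, actually proves the monotonicity of $D_{\mathrm{RMT}}$ that the paper only asserts.
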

		\begin{IEEEproof}
			Please refer to Appendix \ref{app:F} .
		\end{IEEEproof}
		
		Substituting the conclusion $T_\tau^* = M$ from Proposition \ref{prop:DA_optimal_time} into problem $(\mathcal{P}4)$, the original problem is simplified into a power allocation problem between pilots and data, formulated as:
		\begin{subequations}\label{eq:P_DA_reduced}
			\begin{align}
				(\mathcal{P}5): \max_{\rho_\tau, \rho_d} \quad 
				&  \frac{T - M}{T} \mathbb{E} \left[ \log_2 \det \left( \mathbf{I}_N + \frac{\rho_{\text{eff}}}{M} \bar{\mathbf{H}}\bar{\mathbf{H}}^H \right) \right] \label{eq:obj_P_DA_reduced} \\
				\text{s.t.} \quad 
				&  D_{\mathrm{RMT}} \le D, \label{eq:cons_DA_sensing_reduced} \\
				&  \rho_\tau M + \rho_d(T - M) \le \rho T, \label{eq:cons_DA_energy_reduced} \\
				&  \rho_\tau \ge 0, \quad \rho_d \ge 0, \label{eq:cons_DA_boundary_reduced}
			\end{align}
		\end{subequations}
		where
		\begin{align}
			\rho_{\text{eff}} &= \frac{\rho_\tau \rho_d}{1 + \rho_\tau + \rho_d}. \label{eq:rho_eff_reduced}
		\end{align}
		\begin{align}
			D_{\mathrm{RMT}} &= \frac{NM}{2(1 + \rho_\tau)\rho_d} \left[ \sqrt{ b^2 + 4(1 + \rho_\tau)\rho_d } - b \right], \label{eq:D_RMT_reduced} 
		\end{align}
		Since the objective function is monotonically increasing with respect to $\rho_{\text{eff}}$, constraint \eqref{eq:cons_DA_energy_reduced} must hold with equality at the optimal solution. Thus, $\rho_\tau$ can be expressed as a function of $\rho_d$:
		\begin{align}
			\rho_\tau(\rho_d) = \frac{\rho T - \rho_d(T - M)}{M}.
			\label{eq:rho_tau_rho_d}
		\end{align}
		By substituting \eqref{eq:rho_tau_rho_d} into the sensing constraint $D_{\mathrm{RMT}} = D$, the optimal data power $\rho_d(D)$ is uniquely determined, which in turn yields the $R(D)$ function. By performing concavity and extrema analysis on $R(D)$, we establish the following theorem.
		\begin{theorem}\label{thm:Rate_Distortion_DA}
			Under optimal power allocation, the $R(D)$ function for the DAS scheme is given by
			\begin{align}
				R(D) = \frac{T - M}{T}\,\mathbb{E}\!\left[
				\log_2\det\!\left(
				\mathbf{I}_N + \frac{\rho_{\mathrm{eff}}\!\left(\rho_d(D)\right)}{M}
				\bar{\mathbf{H}}\bar{\mathbf{H}}^H
				\right)
				\right],
				\label{eq:RD_DA}
			\end{align}
			where $\rho_{\mathrm{eff}}$ is given by \eqref{eq:rho_eff_reduced}, and the optimal data power $\rho_d(D)$ admits a unique closed-form expression:
			\begin{align}
				\rho_d(D) = \frac{M \left( (u - \kappa) + \sqrt{ (u - \kappa)^2 + \dfrac{4(T-M)}{M} \kappa (u - \kappa) } \right)}{2(T-M)},
				\label{eq:rho_d_D}
			\end{align}
			with $u := 1 + {\rho T}/{M}$ and $\kappa := {NM}/{D}$. The effective domain of $R(D)$ is $D \in [D_{\min}^{(3)},\, D^{(3)*}]$. Within this interval, $R(D)$ is strictly monotonically increasing and concave with respect to $D$, where
			\begin{align}
				D_{\min}^{(3)} &= \frac{NM^2}{M + \rho T},
				\label{eq:DA_Dmin} \\
				D^{(3)*} &= D_{\mathrm{RMT}}\!\left(\rho_d^*\right), \qquad
				\label{eq:DA_Dstar} \\
				\rho_d^* &= \frac{-(M + \rho T)}{2M - T} \nonumber \\
				&\quad + \frac{\sqrt{(M + \rho T)^2 + \dfrac{(M+\rho T)(2M-T)\rho T}{T-M}}}{2M - T}. \notag
			\end{align}
		\end{theorem}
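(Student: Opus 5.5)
The proof works entirely with the scalar quantities appearing in $(\mathcal{P}5)$. The objective is a strictly increasing function of $\rho_{\mathrm{eff}}$. Writing $f(s):=\mathbb{E}[\log_2\det(\mathbf{I}_N+\frac{s}{M}\bar{\mathbf{H}}\bar{\mathbf{H}}^H)]$, $f$ is increasing and concave in $s$, since $\log\det(\mathbf{I}+s\mathbf{A})$ is concave in $s$ for $\mathbf{A}\succeq 0$. It therefore suffices to study how $\rho_{\mathrm{eff}}$ depends on $D$ along the active constraints.

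\textbf{Step 1: closed form for $\rho_d(D)$.} Keep $\rho_\tau$ from \eqref{eq:rho_tau_rho_d}, so $c=1+\rho_\tau=u-\rho_d(T-M)/M$. The constraint $D_{\mathrm{RMT}}=D$ means $x=D/(NM)=1/\kappa$, where $x$ is the positive root of the quadratic \eqref{eq:quadratic}. Substitute $x=1/\kappa$ into $c\rho_d x^2+bx-1=0$ with $b=c+\rho_d T_d/M-\rho_d$ and $T_d=T-M$, and multiply by $\kappa^2$:
\[
c\rho_d+\kappa\bigl(c+\rho_d(T-M)/M-\rho_d\bigr)-\kappa^2=0 .
\]
Inserting $c=u-\rho_d(T-M)/M$ yields a quadratic in $\rho_d$ whose leading coefficient is $-(T-M)/M$. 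After collecting terms, the linear coefficient is $u-\kappa$ and the constant is $\kappa(u-\kappa)$. Solving and keeping the nonnegative root gives \eqref{eq:rho_d_D}. Uniqueness follows because the product of the roots has the sign of $-\kappa(u-\kappa)M/(T-M)<0$ whenever $\kappa<u$, so exactly one root is positive.

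\textbf{Step 2: domain endpoints.} The condition $\kappa\le u$ is equivalent to $D\ge NM^2/(M+\rho T)$, which is $D^{(3)}_{\min}$. At $\kappa=u$ we get $\rho_d=0$, so $\rho_{\mathrm{eff}}=0$ and the rate vanishes; this matches Proposition \ref{prop:lower_bound}, which shows $D^{(3)}_{\min}$ cannot be beaten.

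For the upper endpoint, maximize $g(\rho_d)=\rho_\tau\rho_d/(1+\rho_\tau+\rho_d)$ over $\rho_d$ with $\rho_\tau=(\rho T-\rho_d(T-M))/M$. Setting $g'=0$ gives a quadratic in $\rho_d$. Its appropriate root is the stated $\rho_d^*$; this is the same computation as in the PS case of Theorem \ref{thm:Rate_Distortion}, Appendix C, written in the variable $\rho_d$. Then $D^{(3)*}=D_{\mathrm{RMT}}(\rho_d^*)$. Beyond this point, raising $D$ (equivalently, raising $\rho_d$) decreases $\rho_{\mathrm{eff}}$, so the distortion constraint becomes inactive. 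The case $2M=T$ should be handled as a limit.

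\textbf{Step 3: monotonicity and concavity.} First show that $\rho_d(D)$ is increasing in $D$. Since $\kappa$ decreases in $D$, this reduces to showing that $D_{\mathrm{RMT}}$ is increasing in $\rho_d$ along the budget line: moving power from pilots to data worsens sensing because $c$ shrinks faster than the data term compensates. This can be checked by implicit differentiation of the quadratic relation above. On $[\rho_d(D_{\min}),\rho_d^*]$, $g$ is increasing, so $R$ is strictly increasing in $D$.

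For concavity, write $R=\frac{T-M}{T}f(\rho_{\mathrm{eff}}(D))$. Because $f$ is concave and increasing, it suffices to prove that $\rho_{\mathrm{eff}}(D)$ is concave in $D$ on the domain, which is the analogue of what Appendix C does via \eqref{eq:rho_eff_D}. I would parametrize by $\kappa$ (note $D=NM/\kappa$) and compute $d^2\rho_{\mathrm{eff}}/dD^2$ through the chain rule. The inputs are $g$ (concave in $\rho_d$, since it is a ratio of a concave quadratic-type numerator over a linear denominator) and $\rho_d(D)$ from Step 1.

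\textbf{Main obstacle.} The hard part is the sign of the second derivative. The composition $g(\rho_d(D))$ involves the square root in \eqref{eq:rho_d_D}, and $\rho_d(D)$ itself need not be concave. If the direct sign computation becomes unwieldy, I would instead show that the hypograph $\{(D,r): r\le \rho_{\mathrm{eff}}\}$ is convex. To do this, I would exhibit $\rho_{\mathrm{eff}}$ as the value of a maximization that is jointly concave in $(D,\text{powers})$, using the convexity of $D_{\mathrm{RMT}}$ in $(\rho_\tau,\rho_d)$. This convexity is inherited from the convexity of $\mathbb{E}\operatorname{Tr}(\cdot)^{-1}$ in $\mathbf{X}\mathbf{X}^H$, which is affine in the powers.
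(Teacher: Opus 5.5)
Your Steps 1 and 2 match the paper's Appendix G: the same quadratic $\frac{T-M}{M}\rho_d^2-(u-\kappa)\rho_d-\kappa(u-\kappa)=0$, the same positive root, and the same maximizer $\rho_d^*$ of $g$.

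For concavity, the paper uses your primary chain-rule route, $\frac{d^2\rho_{\mathrm{eff}}}{dD^2}=g''(\rho_d')^2+g'\rho_d''$. It closes it with $g''<0$, $g'\ge 0$ for $\rho_d\le\rho_d^*$, and $\rho_d''(D)<0$. Your worry that $\rho_d(D)$ ``need not be concave'' is unfounded, and it is the only thing that stalls your direct computation. On the budget line, $b=u-\rho_d$ and $c=u-K\rho_d$ with $K=(T-M)/M$. Hence $\kappa(\rho_d)=\tfrac12\bigl[u-\rho_d+\sqrt{q}\bigr]$ with $q=u^2+2u\rho_d+(1-4K)\rho_d^2$. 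Since $2qq''-(q')^2=-16Ku^2<0$, both $\sqrt q$ and $\kappa$ are strictly concave. So $D=NM/\kappa$ is strictly convex in $\rho_d$. It is also increasing, with zero slope at $\rho_d=0$. Its inverse $\rho_d(D)$ is therefore strictly concave, and it behaves like $\sqrt{u(u-\kappa)/K}$ near $D_{\min}^{(3)}$. With this fact, your primary route finishes in one line.

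Your fallback is a genuinely different and valid way to get the same ingredient. For each fixed $\mathbf{G}$, $\frac1M\operatorname{Tr}(c\mathbf{I}_M+\rho_d\mathbf{W})^{-1}$ is jointly convex in $(c,\rho_d)$. Pointwise (almost-sure) limits of convex functions are convex, so $D_{\mathrm{RMT}}$ is convex along the budget line. This is exactly $\rho_d''(D)<0$ in disguise. The value-function argument then gives concavity of $D\mapsto\max\{g(\rho_d):D_{\mathrm{RMT}}\le D\}$ without differentiating the closed form. The paper's route is shorter, but it leaves the sign of $\rho_d''$ as an unshown computation; yours supplies a structural reason for it.

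Two corrections are needed:
\begin{enumerate}
\item Run the hypograph argument in the single variable $\rho_d$ on the budget line. The function $\rho_\tau\rho_d/(1+\rho_\tau+\rho_d)$ is not jointly concave in $(\rho_\tau,\rho_d)$: its Hessian determinant is $-(1+\rho_\tau+\rho_d)^{-4}<0$.
\item Concavity of $g$ does not follow from ``concave numerator over linear denominator''; for example, $2x/(2-x)$ on $[0,1]$ is convex. It needs the explicit formula $g''=-2A(CA+BP)/(A+B\rho_d)^3$ together with $CA+BP=M(T-M+\rho T)>0$, as in the paper.
\end{enumerate}
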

		\begin{IEEEproof}
			please refer to Appendix \ref{app:G}.
		\end{IEEEproof}
		\begin{remark}
			The minimum distortion $D_{\min}^{(3)} $ in \eqref{eq:DA_Dmin} is identical to that of the PS scheme in \eqref{eq:D_min}. In the limit as $\rho_d \to 0$, the RMT asymptotic bound of the DAS scheme reduces to Jensen's lower bound. Furthermore, $D^{(3)*}$ corresponds to the distortion value that achieves the maximum communication rate.
		\end{remark}

		\subsection{$R(D)$ with Equal Power Allocation}
		
		Under  equal power allocation, i.e., $\rho_\tau = \rho_d = \rho$, the total power constraint is automatically satisfied. The only remaining optimization variable is the pilot duration $T_\tau$. Thus, problem $(\mathcal{P}4)$ simplifies to:
		\begin{subequations}\label{prob:FP}
			\begin{align}
				(\mathcal{P}6): \max_{T_\tau} \quad 
				& 
				\frac{T - T_\tau}{T}\,\mathbb{E}\!\left[
				\log_2\det\!\left(
				\mathbf{I}_N + \frac{\rho_{\mathrm{eff}}(T_\tau)}{M}\bar{\mathbf{H}}\bar{\mathbf{H}}^H
				\right)
				\right] \label{prob:FP_obj} \\
				\text{s.t.} \quad
				& D_{\mathrm{RMT}} \leq D, \label{prob:FP_c1} \\
				& M \leq T_\tau \leq T, \label{prob:FP_c2}
			\end{align}
		\end{subequations}
		where
		\begin{align}
			\rho_{\mathrm{eff}}(T_\tau) = \frac{\rho^2 T_\tau/M}{1 + \rho + \rho T_\tau/M},
			\label{eq:rho_eff_FP}
		\end{align}
		\begin{align}
			D_{\mathrm{RMT}}(T_\tau) = \frac{2NM}{\sqrt{b^2 + 4\rho c} + b}.
			\label{eq:D_FP_simplified}
		\end{align}
		with $c = 1 + \rho T_\tau/M$ and $b = 1 + \rho T/M - \rho$.
		
		Since $c$ is strictly monotonically increasing with respect to $T_\tau$, $D_{\mathrm{RMT}}(T_\tau)$ is strictly monotonically decreasing with respect to $T_\tau$. Therefore, the equation $D_{\mathrm{RMT}}(T_\tau) = D$ has a unique solution $T_\tau(D)$, which leads to the following theorem.
		\begin{theorem}\label{thm:RD_FP}
			Under the equal power constraint, the $R(D)$ function for the DAS scheme is given by
			\begin{align}
				R(D) = \frac{T - T_\tau(D)}{T}\,\mathbb{E}\!\left[
				\log_2\det\!\left(
				\mathbf{I}_N + \frac{\rho_{\mathrm{eff}}\!\left(T_\tau(D)\right)}{M}
				\bar{\mathbf{H}}\bar{\mathbf{H}}^H
				\right)
				\right],
				\label{eq:RD_FP}
			\end{align}
			where $T_\tau(D)$ admits the closed-form expression:
			\begin{align}
				T_\tau(D) = \frac{M}{\rho}\!\left[\frac{\kappa(\kappa - b)}{\rho} - 1\right].
				\label{eq:Ttau_D_FP}
			\end{align}
			The effective domain of $R(D)$ is $D \in [D_{\min}^{(4)},\, D^{(4)*}]$. Within this interval, $R(D)$ is strictly concave, where
			\begin{align}
				D_{\min}^{(4)} &= \frac{NM^2}{M + \rho T},
				\label{eq:FP_Dmin} \\
				D^{(4)*} &= \frac{2NM}{\sqrt{b^2
						+ 4\rho\!\left(1 + \dfrac{\rho T_\tau^*}{M}\right)} + b},
				\label{eq:FP_Dstar}
			\end{align}
			and $T_\tau^*$ is the unique stationary point of the objective function \eqref{prob:FP_obj} satisfying $\frac{dR}{dT_\tau} = 0$.
		\end{theorem}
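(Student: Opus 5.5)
The plan has three steps: invert the distortion constraint in closed form, identify the endpoints of the effective domain, and then establish concavity by treating the composite function.

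\emph{Step 1: the inverse $T_\tau(D)$.} By Proposition~\ref{prop:MSE_asymp}, $D_{\mathrm{RMT}}=NMx^*$, where $x^*$ is the positive root of \eqref{eq:quadratic}. Here $c=1+\rho T_\tau/M$ and $\rho_d=\rho$, so $b=c+\rho T_d/M-\rho=1+\rho T/M-\rho$ does not depend on $T_\tau$. This is why \eqref{eq:D_FP_simplified} holds. Setting $D_{\mathrm{RMT}}=D$ means $x^*=1/\kappa$ with $\kappa=NM/D$. Substituting into $c\rho x^2+bx-1=0$ gives $c\rho/\kappa^2+b/\kappa-1=0$, hence $c=\kappa(\kappa-b)/\rho$. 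Solving $c=1+\rho T_\tau/M$ for $T_\tau$ yields \eqref{eq:Ttau_D_FP}. Uniqueness follows from the strict monotonicity of $c\mapsto D_{\mathrm{RMT}}$ already noted before the statement. Since the constraint $D_{\mathrm{RMT}}\le D$ is equivalent to $T_\tau\ge T_\tau(D)$, and the objective \eqref{prob:FP_obj} depends on $D$ only through $T_\tau$, the problem reduces to a one-dimensional maximization of $f(T_\tau)$ over $[\max(M,T_\tau(D)),T]$.

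\emph{Step 2: the endpoints.} At $T_\tau=T$ we have $c=1+\rho T/M$ and $T_d=0$, so the data term vanishes and $D_{\mathrm{RMT}}$ should reduce to $NM/c=NM^2/(M+\rho T)$, which gives \eqref{eq:FP_Dmin}. The formula is degenerate at $T_d=0$, so I would verify this via the limit $\rho_d T_d\to 0$ in \eqref{eq:MSE_average}, or by checking directly that $\kappa=c$ solves the quadratic with $b=1$. For the upper endpoint, let
\[
f(T_\tau)=\tfrac{T-T_\tau}{T}\,g(\rho_{\mathrm{eff}}(T_\tau)), \qquad g(s)=\mathbb{E}\log_2\det(\mathbf{I}_N+\tfrac{s}{M}\bar{\mathbf{H}}\bar{\mathbf{H}}^H).
\]
I would show that $f$ is strictly concave in $T_\tau$ on $[M,T]$, with $f(T)=0$. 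Then $f$ has at most one stationary point $T_\tau^*$; if it lies in the interior, it is the unconstrained maximizer. Since $f$ increases for $T_\tau<T_\tau^*$ and decreases afterwards, the constraint $T_\tau\ge T_\tau(D)$ is active only when $T_\tau(D)\ge T_\tau^*$, that is, when $D\le D_{\mathrm{RMT}}(T_\tau^*)=D^{(4)*}$, which is \eqref{eq:FP_Dstar}. Beyond that point the constraint is slack, so the effective domain is $[D_{\min}^{(4)},D^{(4)*}]$.

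\emph{Step 3: concavity, the main obstacle.} For concavity of $f$ in $T_\tau$, note that $\rho_{\mathrm{eff}}(T_\tau)$ in \eqref{eq:rho_eff_FP} is an increasing concave linear-fractional function of $T_\tau$, and $g$ is increasing and concave in $s$ because the log-det is concave in a scalar SNR. Therefore $h=g\circ\rho_{\mathrm{eff}}$ is increasing and concave. Then $f''=\big((T-T_\tau)h''-2h'\big)/T<0$.

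Concavity of $R$ in $D$ is the composition $R(D)=f(T_\tau(D))$ with $T_\tau(D)$ decreasing in $D$. I would compute
\[
R''=f''(T_\tau')^2+f'\,T_\tau''.
\]
On the effective domain $T_\tau\ge T_\tau^*$, so $f'\le0$. Hence I need $T_\tau''(D)\ge0$, i.e.\ $T_\tau$ convex in $D$. From \eqref{eq:Ttau_D_FP}, $T_\tau$ is proportional to $\kappa^2-b\kappa$ with $\kappa=NM/D$. Its second derivative in $D$ is $\propto 6\kappa^2-2b\kappa$ (up to positive factors $1/D^2$), which is nonnegative iff $\kappa\ge b/3$. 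Since $\kappa\ge c>0$ and $\kappa(\kappa-b)=\rho c>0$ forces $\kappa>b$, this holds.

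The delicate part is the interplay of signs, in particular guaranteeing $f'\le0$ on the whole domain. This requires $T_\tau^*\ge M$; if instead $T_\tau^*<M$, the maximizer is the boundary $T_\tau=M$, and I would note that $D^{(4)*}$ should then be read with $T_\tau^*$ replaced by $M$. Strictness of concavity comes from $f''<0$ together with $T_\tau'\neq0$.
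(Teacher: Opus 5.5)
Your proposal is correct and follows the paper's route: the paper defers to the Appendix~D argument (strict concavity of $f$ in $T_\tau$ from the increasing concave $h$, then the chain rule with $f'\le 0$ and $T_\tau''(D)>0$ on $T_\tau\ge T_\tau^*$). You also supply the DAS-specific check the paper omits, namely convexity of the new $T_\tau(D)$ via $\kappa>b$, and the caveat that $T_\tau^*$ must be replaced by $M$ when the stationary point falls below $M$. One small slip: the direct endpoint check should use $b=c-\rho=1+\rho(T-M)/M$, not $b=1$, and with this $b$ the value $\kappa=c$ indeed satisfies $\kappa^2-b\kappa-\rho c=c(c-b-\rho)=0$.
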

		\begin{proof}
			The proof for the strict concavity of $R(T_\tau)$ and the existence of the unique extremum point $T_\tau^*$ is identical to that in Appendix \ref{app:D}, and is thus omitted here. 
		\end{proof}
		\begin{remark}
			In general, there is no closed-form expression for $D^{(4)*}$ (or equivalently, $T_\tau^*$), since its stationary point condition involves the expectation of a matrix log-determinant, which is analytically intractable. Nevertheless, thanks to the global strict concavity of $R(T_\tau)$ over $[M, T]$, $T_\tau^*$ can be efficiently determined via a one-dimensional numerical search.
		\end{remark}

		\section{Asymptotic Performance Analysis}\label{sec:asymptotic_analysis}
		While it is intuitively expected that incorporating data payloads yields better sensing performance, our asymptotic analysis rigorously quantifies the exact magnitudes of these performance gains.
		To compare the performance boundaries of the PS and DAS schemes within a unified framework, we map the sensing distortion $D$ to the effective SNR\footnote{This mapping stems from converting the total sensing distortion $D$ into the average estimation error variance of a single channel coefficient, $\sigma_{\tilde{H}}^2 = D/(NM)$, and substituting it into the standard form of the effective SNR under imperfect CSI.}:
		\begin{align}
			\rho_{\mathrm{eff}}
			= \frac{\rho_d(NM - D)}{NM + \rho_d D}.
			\label{eq:rho_eff_mapping}
		\end{align}
		Subsequently, we analytically evaluate the performance of each scheme under two asymptotic limits: the low-SNR regime ($\rho \to 0$) and the high-SNR regime ($\rho \to \infty$), respectively.

		\subsection{Low-SNR Asymptotic Analysis}
		
		As $\rho \to 0$, both $\rho_\tau \to 0$ and $\rho_d \to 0$. Define $x_1 := \rho_\tau T_\tau/M$ and $x_2 := \rho_d T_d/M$; it then follows that $x_1, x_2 \to 0$.
		
		For the PS scheme, the first-order approximation of the sensing distortion is given by
		\begin{align}
			D_{\mathrm{PS}}
			&= \frac{NM}{1 + x_1} \nonumber \\
			&\approx NM(1 - x_1) \nonumber \\
			&= NM - N\rho_\tau T_\tau.
			\label{eq:D_Trad_lowSNR}
		\end{align}
		For the DAS scheme, noting that $b = 1 + x_1 + x_2 - \rho_d$, we perform a first-order Taylor expansion on the radical term as $\sqrt{b^2 + 4(1+x_1)\rho_d} \approx b + \frac{2(1+x_1)\rho_d}{b}$. By simplifying, we obtain
		\begin{align}
			D_{\mathrm{DAS}}
			&\approx \frac{NM}{1 + x_1 + x_2} \nonumber \\
			&\approx NM - N\rho T.
			\label{eq:D_RMT_lowSNR}
		\end{align}
		Substituting the sensing distortions \eqref{eq:D_Trad_lowSNR} and \eqref{eq:D_RMT_lowSNR} for both schemes into \eqref{eq:rho_eff_mapping}, respectively, yields
		\begin{subequations}\label{eq:lowSNR_approx_rho_eff}
			\begin{align}
				\rho_{\mathrm{eff}}^{\mathrm{PS}}
				&\approx \frac{\rho_d(NM - D_{\mathrm{PS}})}{NM}
				\approx \rho_d \rho_\tau \frac{T_\tau}{M},
				\label{eq:rho_eff_Trad_approx} \\[4pt]
				\rho_{\mathrm{eff}}^{\mathrm{DAS}}
				&\approx \frac{\rho_d(NM - D_{\mathrm{DAS}})}{NM}
				\approx \rho_d \frac{\rho T}{M}.
				\label{eq:rho_eff_RMT_approx}
			\end{align}
		\end{subequations}
		In the following, we analyze the asymptotic gains of each scheme under both optimal power allocation and equal power allocation strategies.
		
		\subsubsection{Optimal Power Allocation}
		
		From Proposition \ref{prop:DA_optimal_time}, we have $T_\tau^* = M$, which implies $T_d = T-M$. Under the total energy constraint $\rho_\tau M + \rho_d(T-M) = \rho T$, maximizing $\rho_{\mathrm{eff}}^{\mathrm{PS}} \propto \rho_d\rho_\tau$ is equivalent to maximizing the product $\rho_\tau M \cdot \rho_d(T-M)$. According to the arithmetic–geometric mean inequality, the optimal solution is attained when the two terms are equal:
		\begin{align}
			\rho_\tau = \frac{\rho T}{2M}, \qquad
			\rho_d    = \frac{\rho T}{2(T-M)}.
		\end{align}
		Substituting these into \eqref{eq:lowSNR_approx_rho_eff} yields
		\begin{subequations}
			\begin{align}
				\rho_{\mathrm{eff}}^{\mathrm{PS}}
				&\approx \frac{T^2}{4M(T-M)}\rho^2, \\
				\rho_{\mathrm{eff}}^{\mathrm{DAS}}
				&\approx \frac{T^2}{2M(T-M)}\rho^2.
			\end{align}
		\end{subequations}
		Therefore, the DAS scheme achieves a strict 3 dB low-SNR gain over the PS scheme:
		\begin{align}
			\lim_{\rho \to 0}
			\frac{\rho_{\mathrm{eff}}^{\mathrm{DAS}}}{\rho_{\mathrm{eff}}^{\mathrm{PS}}} = 2.
		\end{align}

		\subsubsection{Equal Power Allocation}
		
		Under the condition of $\rho_\tau = \rho_d = \rho$, substituting this into \eqref{eq:lowSNR_approx_rho_eff} yields
		\begin{subequations}\label{eq:lowSNR_fixed_rho_eff}
			\begin{align}
				\rho_{\mathrm{eff}}^{\mathrm{PS}}
				&\approx \frac{\rho^2 T_\tau}{M},
				\label{eq:rho_eff_Trad_fixed} \\
				\rho_{\mathrm{eff}}^{\mathrm{DAS}}
				&\approx \frac{\rho^2 T}{M}.
				\label{eq:rho_eff_RMT_fixed}
			\end{align}
		\end{subequations}
		In the low-SNR regime, the communication rate is proportional to $\frac{T-T_\tau}{T} \rho_{\mathrm{eff}}$. For the PS scheme, maximizing $(T-T_\tau)T_\tau$ subject to the constraint $T_\tau \geq M$ gives $T_\tau^* = \max(M,\, T/2)$. For the DAS scheme, $\rho_{\mathrm{eff}}^{\mathrm{DAS}}$ is independent of $T_\tau$, implying that any $T_\tau \geq M$ can achieve the same asymptotic performance. By substituting the optimal $T_\tau^*$ into \eqref{eq:lowSNR_fixed_rho_eff}, the optimal effective SNRs are obtained as:
		\begin{subequations}
			\begin{align}
				\rho_{\mathrm{eff}}^{\mathrm{PS}} &=
				\begin{cases}
					\dfrac{\rho^2 T}{2M}, & T \geq 2M, \\[4pt]
					\rho^2,               & T < 2M,
				\end{cases} \\[4pt]
				\rho_{\mathrm{eff}}^{\mathrm{DAS}} &= \frac{\rho^2 T}{M}.
			\end{align}
		\end{subequations}
		Therefore, in typical scenarios with large coherence blocks ($T \geq 2M$), the DAS scheme likewise achieves a 3 dB low-SNR gain:
		\begin{align}
			\lim_{\rho \to 0}
			\frac{\rho_{\mathrm{eff}}^{\mathrm{DAS}}}{\rho_{\mathrm{eff}}^{\mathrm{PS}}} =
			\begin{cases}
				2,   & T \geq 2M, \\
				T/M, & T < 2M.
			\end{cases}
		\end{align}

		\subsection{High-SNR Asymptotic Analysis}
		
		As $\rho \to \infty$, the time-multiplexing loss becomes the primary bottleneck. Therefore, both schemes adopt the shortest pilot duration, $T_\tau^* = M$, which implies $T_d = T - M$. Let us define $K := T_d/M = (T-M)/M$.
		
		We first present the asymptotic expressions of the sensing distortion for both schemes in the high-SNR regime. For the PS scheme, as $\rho_\tau \to \infty$, we have
		\begin{align}
			D_{\mathrm{PS}} \approx \frac{NM}{\rho_\tau}.
			\label{eq:D_Trad_highSNR}
		\end{align}
		For the DAS scheme, in the high-SNR regime, $b \approx \rho_\tau + (K-1)\rho_d$. Thus, the sensing distortion can be written as
		\begin{align}
			D_{\mathrm{DAS}}
			\approx \frac{NM}{2\rho_\tau\rho_d}
			\left(\sqrt{b^2 + 4\rho_\tau\rho_d} - b\right).
			\label{eq:D_RMT_highSNR}
		\end{align}
		Substituting \eqref{eq:D_Trad_highSNR} and \eqref{eq:D_RMT_highSNR} into \eqref{eq:rho_eff_mapping}, respectively, yields
		\begin{subequations}\label{eq:highSNR_general_rho_eff}
			\begin{align}
				\rho_{\mathrm{eff}}^{\mathrm{PS}} &\approx \frac{\rho_\tau\rho_d}{\rho_\tau + \rho_d}, 
				\label{eq:rho_eff_Trad_highSNR_general} \\
				\rho_{\mathrm{eff}}^{\mathrm{DAS}} &\approx \frac{ \rho_d}{1 + \rho_d \frac{D_{\mathrm{DAS}}}{NM}}. 
				\label{eq:rho_eff_RMT_highSNR_general}
			\end{align}
		\end{subequations}
		In the following, we analyze the asymptotic gains of each scheme under both optimal power allocation and equal power allocation strategies.

		\subsubsection{Optimal Power Allocation}
		
		For the PS scheme, maximizing \eqref{eq:rho_eff_Trad_highSNR_general} subject to the total energy constraint $\rho_\tau + K\rho_d = (1+K)\rho$ yields
		\begin{align}
			\rho_{\mathrm{eff}}^{\mathrm{PS}} \approx \frac{1+K}{(1+\sqrt{K})^2}\,\rho.
			\label{eq:rho_eff_Trad_highSNR_opt}
		\end{align}
		For the DAS scheme, when the condition $4\rho_\tau\rho_d/b^2 \ll 1$ is satisfied, a first-order Taylor expansion on the radical term in \eqref{eq:D_RMT_highSNR} gives $\sqrt{b^2 + 4\rho_\tau\rho_d} \approx b + 2\rho_\tau\rho_d/b$. This simplifies to $D_{\mathrm{DAS}} \approx NM/b$. Substituting this result into \eqref{eq:rho_eff_RMT_highSNR_general} and applying the energy constraint $b + \rho_d \approx (1+K)\rho$, the effective SNR can be simplified into a quadratic function of $\rho_d$:
		\begin{align}
			\rho_{\mathrm{eff}}^{\mathrm{DAS}}
			\approx \frac{\rho_d b}{b + \rho_d}
			= \frac{\rho_d\bigl[(1+K)\rho - \rho_d\bigr]}{(1+K)\rho}.
			\label{eq:rho_eff_RMT_highSNR}
		\end{align}
		Since $\rho_\tau \geq 0$, it follows that $\rho_d \in [0,\,(1+K)\rho/K]$. Maximizing \eqref{eq:rho_eff_RMT_highSNR} within this feasible region yields
		\begin{align}
			\rho_{\mathrm{eff}}^{\mathrm{DAS}} \gtrsim
			\begin{cases}
				\dfrac{1+K}{4}\,\rho,          & 1 < K < 2, \\[6pt]
				\dfrac{K^2-1}{K^2}\,\rho,      & K \geq 2.
			\end{cases}
			\label{eq:rho_eff_RMT_highSNR_opt}
		\end{align}
		\begin{remark}
			Equation \eqref{eq:rho_eff_RMT_highSNR_opt} represents the asymptotic lower bound for the DAS scheme. In scenarios with large coherence blocks where $K \geq 2$, if $\rho_\tau$ grows at a sublinear rate (e.g., $\rho_\tau = \rho^{1-\delta}$, with $\delta \in (0,1)$), the expansion condition $4\rho_\tau\rho_d/b^2 \to 0$ holds strictly, rendering this lower bound tight.
		\end{remark}
		Based on \eqref{eq:rho_eff_Trad_highSNR_opt} and \eqref{eq:rho_eff_RMT_highSNR_opt}, the high-SNR gain ratio between the two schemes satisfies:
		\begin{align}
			\lim_{\rho\to\infty} \frac{\rho_{\mathrm{eff}}^{\mathrm{DAS}}}{\rho_{\mathrm{eff}}^{\mathrm{PS}}}
			\geq
			\begin{cases}
				\dfrac{(1+\sqrt{K})^2}{4},              & 1 < K < 2, \\[6pt]
				\dfrac{(K-1)(1+\sqrt{K})^2}{K^2},       & K \geq 2.
			\end{cases}
			\label{eq:highSNR_ratio}
		\end{align}
		Since $K > 1$ (i.e., $T_d > M$), the values of both piecewise functions above are strictly greater than $1$. This indicates that, under optimal power allocation, the high-SNR performance of the DAS scheme consistently outperforms that of the PS scheme.

		\subsubsection{Equal Power Allocation}
		
		Under the condition of $\rho_\tau = \rho_d = \rho$, it is straightforward from \eqref{eq:rho_eff_Trad_highSNR_general} to obtain the effective SNR for the PS scheme as:
		\begin{align}
			\rho_{\mathrm{eff}}^{\mathrm{PS}} \approx \frac{1}{2}\,\rho.
			\label{eq:rho_eff_Trad_highSNR_fixed}
		\end{align}
		For the DAS scheme, setting $\rho_\tau = \rho_d = \rho$ (in this case, $b \approx K\rho$) in \eqref{eq:D_RMT_highSNR}, its sensing distortion can be approximated as:
		\begin{align}
			D_{\mathrm{DAS}} 
			&\approx \frac{NM}{2\rho^2} \left[ \sqrt{K^2\rho^2+4\rho^2} - K\rho \right] \nonumber \\
			&= \frac{NM}{\rho} \left( \frac{\sqrt{K^2+4}-K}{2} \right).
		\end{align}
		Substituting this result into \eqref{eq:rho_eff_RMT_highSNR_general} yields
		\begin{align}
			\rho_{\mathrm{eff}}^{\mathrm{DAS}} \approx \frac{2}{2+\sqrt{K^2+4}-K}\,\rho.
			\label{eq:rho_eff_RMT_highSNR_fixed}
		\end{align}
		Since $\sqrt{K^2+4} - K < 2$ holds for any $K > 0$, we have
		\begin{align}
			\lim_{\rho\to\infty}
			\frac{\rho_{\mathrm{eff}}^{\mathrm{DAS}}}{\rho_{\mathrm{eff}}^{\mathrm{PS}}}
			= \frac{4}{2+\sqrt{K^2+4}-K} > 1,
		\end{align}
		This indicates that, under the equal power condition, the DAS scheme is likewise strictly superior to the PS scheme in the high-SNR regime.

		To facilitate an intuitive comparison, Table \ref{tab:highSNR_summary} summarizes the asymptotic effective SNRs for both the PS and DAS schemes under the two power strategies, along with their limiting behaviors as $K \to \infty$.
		\begin{table}[!t]
			\centering
			\caption{High-SNR Asymptotic Effective SNR }
			\label{tab:highSNR_summary}
			\renewcommand{\arraystretch}{1.8}
			\begin{tabular}{lcc}
				\toprule
				\textbf{Scheme \& Power} &
				$\boldsymbol{\rho_{\mathrm{eff}}}$ &
				\textbf{Limit} ($K \to \infty$) \\
				\midrule
				PS, Optimal Power &
				$\dfrac{1+K}{(1+\sqrt{K})^2}\,\rho$ &
				$\to \rho$ \\[8pt]
				DAS, Optimal Power &
				$\dfrac{K^2-1}{K^2}\,\rho$ &
				$\to \rho$ (fastest) \\[8pt]
				PS, Equal Power &
				$\dfrac{1}{2}\,\rho$ &
				Fixed loss of $1$\,bit \\[8pt]
				DAS, Equal Power &
				$\displaystyle \frac{2}{2+\sqrt{K^2+4}-K}\,\rho$  &
				$\to \rho$ \\
				\bottomrule
			\end{tabular}
		\end{table}
		
		From Table~\ref{tab:highSNR_summary}, the following conclusions can be drawn. For any finite $K$ and under the same power allocation strategy, the $\rho_{\mathrm{eff}}$ of the DAS scheme is strictly greater than that of the PS scheme. As $K \to \infty$, all configurations except the PS scheme with equal power allocation asymptotically approach $\rho$. Notably, the DAS scheme with optimal power allocation converges the fastest, at a rate of $\mathcal{O}(1/K^2)$, whereas the PS scheme with optimal power allocation converges at a slower rate of $\mathcal{O}(1/\sqrt{K})$.

		\section{Numerical Results}\label{sec:numerical_results}
		
		In this section, we validate the derived theoretical results through Monte Carlo simulations and evaluate the performance gains of the DAS scheme over the PS scheme.
		
		Unless otherwise stated, the default simulation parameters are as follows: number of antennas $M = N = 12$, coherence block length $T = 30$,  $\mathrm{SNR} = 5$\,dB. All curves are averaged over at least $3000$ independent channel realizations. The horizontal axis of the $R(D)$ curves represents the normalized sensing distortion $D/(NM)$, and the vertical axis represents the ergodic communication rate $R$ (in bit/s/Hz).
		
		\begin{figure}[!t]
			\centerline{\includegraphics[width=3.2in]{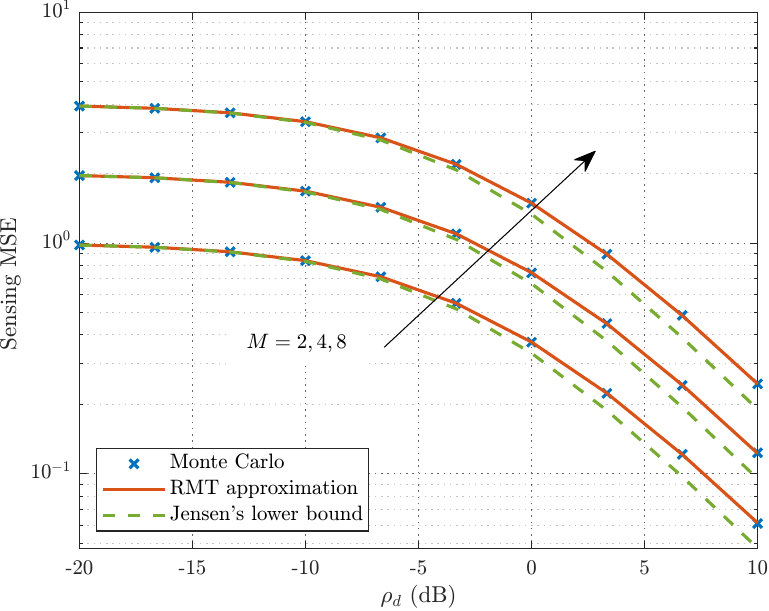}}
			\caption{ Sensing MSE versus $\rho_d$. Fixed parameters: $\gamma = 0.25$ and $c = 2$.}
			\label{fig:rmt_validation}
		\end{figure}
		Fig.~\ref{fig:rmt_validation} validates the accuracy of the RMT asymptotic expression in Proposition~\ref{prop:MSE_asymp}. As the data power $\rho_d$ increases, the sensing MSE monotonically decreases, indicating that the data symbols provide effective observation energy for the sensing task. The RMT approximation matches the Monte Carlo results exceptionally well across different numbers of antennas $M$. In contrast, although Jensen’s lower bound admits a concise form, it becomes noticeably loose in the high-SNR regime because it fails to capture the random spectral fluctuations of the data matrix.
		
			\begin{figure*}[t]
			\centering
			\begin{subfigure}[b]{0.48\textwidth}
				\centering
				\includegraphics[width=3.2in]{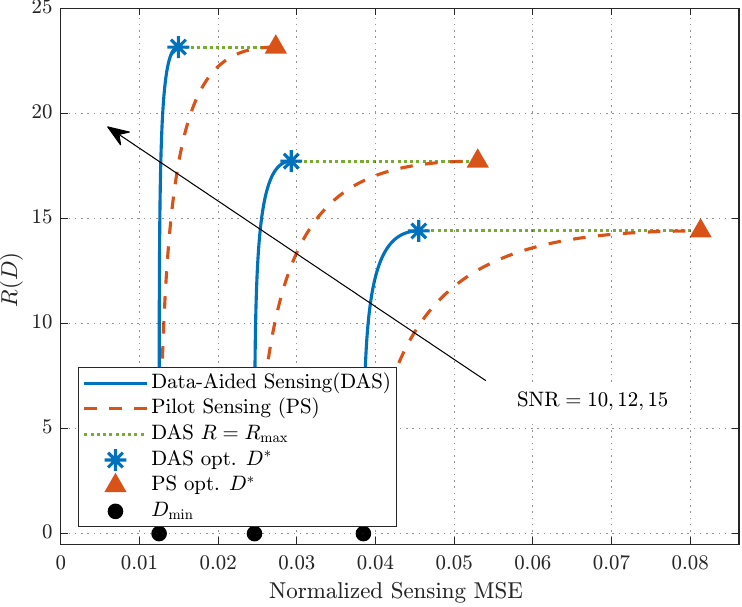}
				\caption{Optimal power allocation}
				\label{fig:SNR_impact_a}
			\end{subfigure}
			\hfill
			\begin{subfigure}[b]{0.48\textwidth}
				\centering
				\includegraphics[width=3.2in]{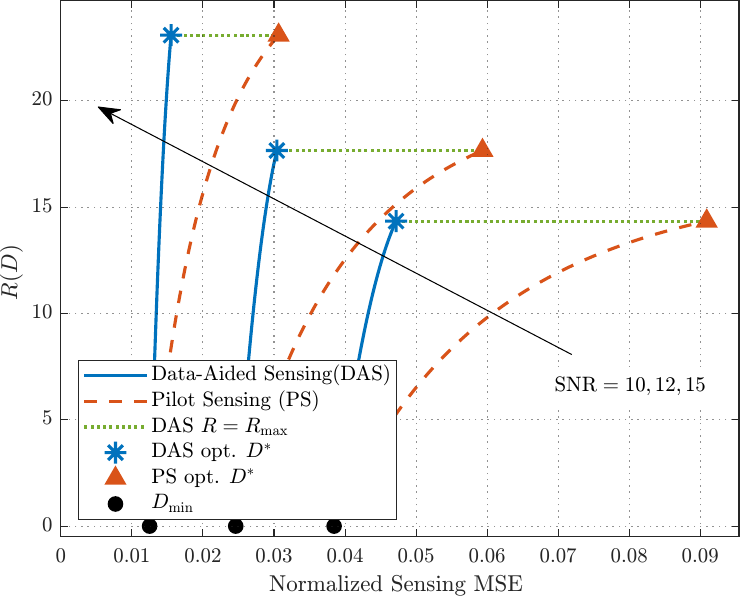}
				\caption{Equal power allocation}
				\label{fig:SNR_impact_b}
			\end{subfigure}
			\caption{$R(D)$ performance of the DAS and PS schemes under different transmit SNRs ($\{10, 12, 15\}$\,dB) with $T = 30$.}
			\label{fig:SNR_impact}
		\end{figure*}
		\begin{figure*}[t]
			\centering
			\begin{subfigure}[b]{0.48\textwidth}
				\centering
				\includegraphics[width=3.2in]{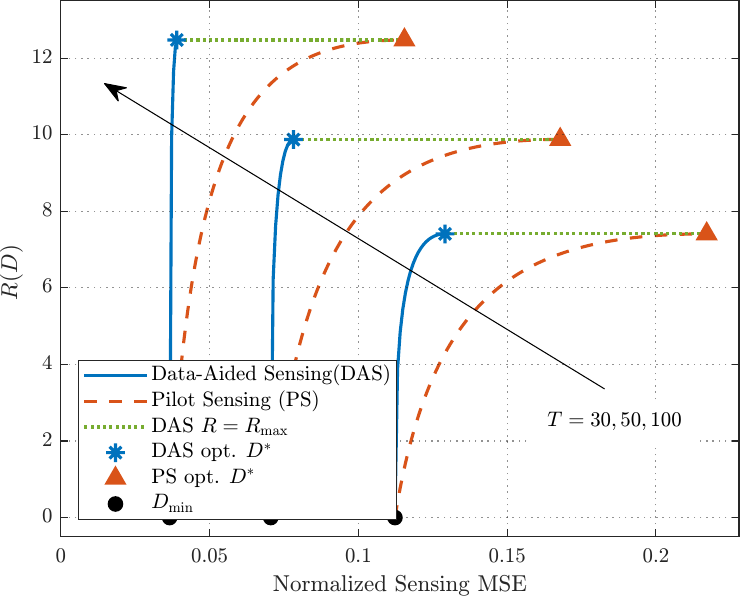}
				\caption{Optimal power allocation}
				\label{fig:T_impact_a}
			\end{subfigure}
			\hfill
			\begin{subfigure}[b]{0.48\textwidth}
				\centering
				\includegraphics[width=3.2in]{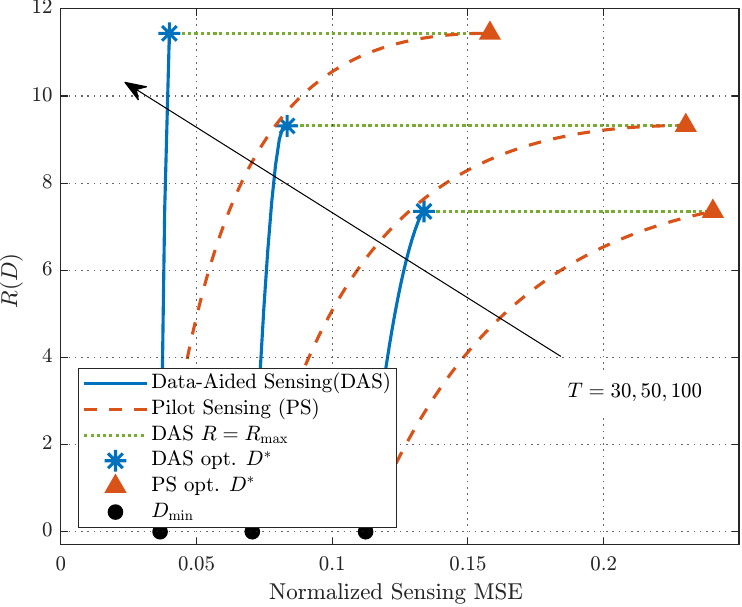}
				\caption{Equal power allocation}
				\label{fig:T_impact_b}
			\end{subfigure}
			\caption{$R(D)$ performance of the DAS and PS schemes under different coherence times ($T \in \{30, 50, 100\}$) with $\mathrm{SNR} = 5$\,dB.}
			\label{fig:T_impact}
		\end{figure*}
		
		Fig.~\ref{fig:SNR_impact} and Fig.~\ref{fig:T_impact} investigate the impact of the transmit SNR and the coherence time $T$ on the $R(D)$ performance of the DAS and PS schemes, respectively. Under any parameter configuration and power strategy, the DAS scheme consistently outperforms the PS scheme. This performance gain stems from the fact that DAS repurposes the communication data symbols as "pseudo-pilots," implicitly injecting additional observation energy for sensing. This effectively breaks the performance bottleneck of the PS scheme, which relies solely on dedicated pilots.

		As the SNR increases (Fig.~\ref{fig:SNR_impact}) or as $T$ grows (Fig.~\ref{fig:T_impact}), the $R(D)$ boundaries of the DAS scheme approach a rectangular shape. This indicates that the system can maintain the maximum communication rate $R_{\max}$ almost losslessly while simultaneously pushing the sensing distortion close to $D_{\min}$. In other words, the communication and sensing performances become asymptotically decoupled when resources are abundant.

		Fig.~\ref{fig:eq_vs_opt_power} investigates the impact of the coherence time $T$ on the power allocation strategies under the DAS scheme. When temporal resources are limited ($T=18$) or when temporal degrees of freedom are abundant ($T=30$), the optimal power allocation strategy can enhance the maximum communication rate by appropriately distributing power between pilots and data. Interestingly, when $T=2M$, the $R(D^*)$ points for both equal power allocation and optimal power allocation perfectly coincide.
		\begin{figure}[t]
			\centerline{\includegraphics[width=3.2in]{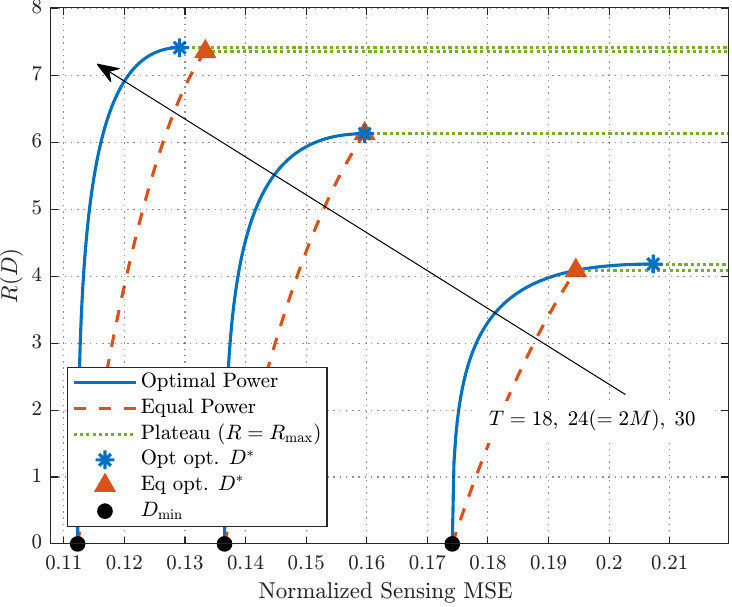}} 
			\caption{$R(D)$ performance of the DAS scheme under equal power allocation versus optimal power allocation, with $T \in \{18, 24, 30\}$.} 
			\label{fig:eq_vs_opt_power}
		\end{figure}
		
		Fig.~\ref{fig:snr_power_gain} presents the $R(D)$ comparison between the two power allocation strategies across different SNRs. As the SNR increases, the performance gain introduced by optimal power allocation gradually diminishes. This indicates that in the high-SNR regime, the bottleneck for the $R(D)$ performance shifts to the temporal resource constraint, rendering the marginal benefit of power optimization diminishing. Therefore, under high-SNR conditions, equal power allocation is sufficient to approach the performance of optimal power allocation.
		\begin{figure}[t]
			\centerline{\includegraphics[width=3.2in]{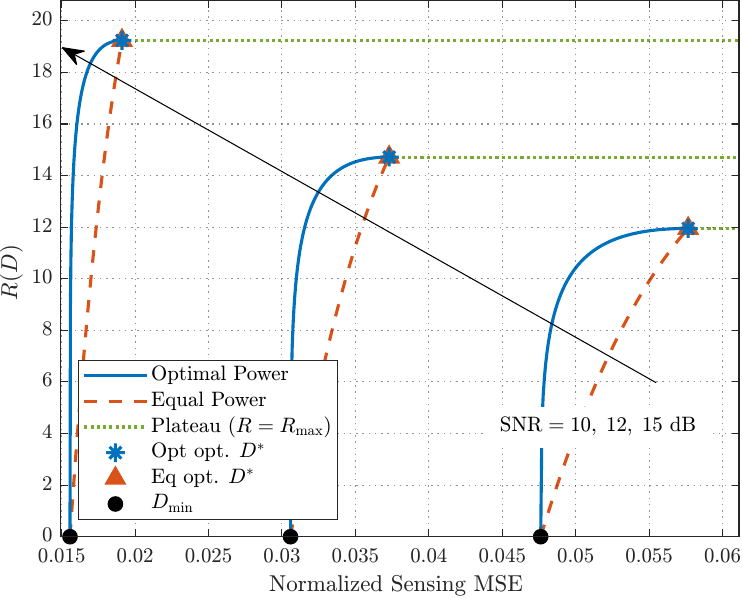}} 
			\caption{$R(D)$ comparison of the DAS scheme under equal power allocation versus optimal power allocation across different transmit SNRs ($\{10, 12, 15\}$\,dB) with $T = 24$.} 
			\label{fig:snr_power_gain}
		\end{figure}

		\section{Conclusion}\label{sec:conclusion}
		
		In this paper, we investigated the tradeoff between communication and sensing in a non-coherent P2P MIMO ISAC system, analyzing both the PS and DAS schemes. Based on RMT, a closed-form asymptotic expression for the sensing distortion of the DAS scheme was derived. Furthermore, the analytical forms of the $R(D)$ functions for both schemes were established under optimal power allocation and equal power allocation. The asymptotic analyses explicitly quantified the performance margin between the two schemes, revealing that the DAS scheme achieves a $3$ dB effective SNR improvement at low SNRs and an accelerated error decay rate at high SNRs.Moreover, in the limit of large coherence blocks or high SNRs, the communication rate and the sensing accuracy tend to be asymptotically decoupled.

		\appendices
		\section{Proof of proposition \ref{prop:optimal_pilot}}\label{app:A}
		Let $\mathbf{R}_{X_\tau} := \frac{1}{T_\tau}\mathbf{X}_\tau \mathbf{X}_\tau^H$. The optimization problem is then equivalent to
		\begin{subequations}\label{eq:pilot_opt_matrix}
			\begin{align}
				\min_{\mathbf{R}_{X_\tau} \succeq 0} \quad
				& N \operatorname{Tr}\left[
				\left(\mathbf{I}_M
				+ \frac{T_\tau}{\sigma_z^2}\mathbf{R}_{X_\tau}\right)^{-1}
				\right], \label{eq:pilot_opt_matrix_obj} \\
				\text{s.t.} \quad
				& \operatorname{Tr}(\mathbf{R}_{X_\tau}) \leq P_\tau.
				\label{eq:pilot_opt_matrix_cons}
			\end{align}
		\end{subequations}
		Applying the eigenvalue decomposition (EVD) to $\mathbf{R}_{X_\tau}$, let $\mathbf{R}_{X_\tau} = \mathbf{U}\boldsymbol{\Lambda}\mathbf{U}^H$, where $\mathbf{U}$ is a unitary matrix and $\boldsymbol{\Lambda} = \operatorname{diag}(\lambda_1, \dots, \lambda_M)$. Then, the objective function simplifies to $N\sum_{i=1}^M \frac{1}{1 + T_\tau\lambda_i/\sigma_z^2}$, and the optimization problem is transformed into
		\begin{subequations}\label{eq:pilot_opt_scalar}
			\begin{align}
				\min_{\{\lambda_i \geq 0\}} \quad
				& N\sum_{i=1}^M \frac{1}{1 + T_\tau\lambda_i/\sigma_z^2},
				\label{eq:pilot_opt_scalar_obj} \\
				\text{s.t.} \quad
				& \sum_{i=1}^M \lambda_i \leq P_\tau.
				\label{eq:pilot_opt_scalar_cons}
			\end{align}
		\end{subequations}
		
		Since the objective function is a convex function with respect to $\{\lambda_i\}$, according to Jensen's inequality, the optimal solution is achieved when $\lambda_1 = \cdots = \lambda_M = P_\tau/M$. That is, the optimal transmit covariance matrix is $\mathbf{R}_{X_\tau}^* = \frac{P_\tau}{M}\mathbf{I}_M$, which is equivalent to $\mathbf{X}_\tau\mathbf{X}_\tau^H = \frac{P_\tau T_\tau}{M}\mathbf{I}_M$. Substituting this into \eqref{eq:mse_expr} and utilizing $\rho_\tau = P_\tau/\sigma_z^2$, we directly obtain \eqref{eq:mse_opt}.

		\section{Proof of proposition \ref{prop:optimal_time}}\label{app:B}
		
		Under the sensing distortion constraint $D$, the total required power during the pilot phase is uniquely determined by $D$, which is given by
		\begin{equation}
			\rho_\tau T_\tau = M \frac{NM - D}{D}. \label{eq:app_pilot_energy}
		\end{equation}
		Substituting this into \eqref{eq:P1_energy}, the data power can be expressed as $\rho_d(T_d) = (\rho T - \rho_\tau T_\tau)/T_d$. Since $\rho_\tau T_\tau/M$ is a constant, the objective function simplifies to a single-variable function with respect to $T_d$:
		\begin{equation}
			R(T_d) = \frac{T_d}{T} h(\rho_d(T_d)), \label{eq:app_Ru_simplified}
		\end{equation}
		where $h(\rho_d) := \mathbb{E}\left[\log_2 \det\left(\mathbf{I}_N + \frac{\rho_{\mathrm{eff}}(\rho_d)}{M} \bar{\mathbf{H}}\bar{\mathbf{H}}^H\right)\right]$.
		
		Since $\rho_{\mathrm{eff}}(\rho_d)$ is monotonically increasing and strictly concave with respect to $\rho_d$, by the composition rules preserving concavity, $h(\rho_d)$ is also a strictly concave function and satisfies $h(0) = 0$. According to the tangent bound property of concave functions passing through the origin, for any $\rho_d > 0$, we have $h(\rho_d) > \rho_d h'(\rho_d)$. Taking the derivative of $R(T_d)$ with respect to $T_d$ yields:
		\begin{equation}
			\frac{\mathrm{d}R(T_d)}{\mathrm{d}T_d} = \frac{1}{T} \left[ h(\rho_d) - \rho_d h'(\rho_d) \right] > 0. \label{eq:app_derivative}
		\end{equation}
		Therefore, $R(T_d)$ is monotonically increasing with respect to $T_d$. Subject to the constraint $T_\tau \geq M$, the optimal solution is given by $T_\tau^* = M$.

		\section{Proof of Theorem \ref{thm:Rate_Distortion}}\label{app:C}
		Since the log-determinant function is strictly monotonically increasing and concave, the monotonicity and concavity of $R(D)$ are equivalent to the corresponding properties of $\rho_{\mathrm{eff}}(D)$. Expressing $\rho_{\mathrm{eff}}$ as a function of $\rho_\tau$:
		\begin{equation}
			f(\rho_\tau) = \frac{P\rho_\tau - M\rho_\tau^2}{A + B\rho_\tau},
		\end{equation}
		where $P = \rho T$, $A = T(1+\rho) - M > 0$, and $B = T - 2M$. Taking the second-order derivative of $f$ yields:
		\begin{equation}
			f''(\rho_\tau) = \frac{-2A(MA + BP)}{(A + B\rho_\tau)^3}.
		\end{equation}
		Noting that $MA + BP = (T-M)(M + \rho T) > 0$, and that $A + B\rho_\tau > 0$ always holds within the feasible region $\rho_\tau \in [0,\, \rho T/M]$, we have $f''(\rho_\tau) < 0$, meaning that $f$ is a strictly concave function. Setting $f'(\rho_\tau) = 0$ and taking the positive root yields the unique maximum point:
		\begin{equation}
			\rho_\tau^* = \frac{-A + \sqrt{A(A + BP/M)}}{B}.
		\end{equation}
		Substituting $\rho_\tau(D) = NM/D - 1$ into the expression above yields \eqref{eq:D_star}.
		
		For the composite function $\rho_{\mathrm{eff}}(D) = f(\rho_\tau(D))$, given that $\rho_\tau'(D) = -NM/D^2 < 0$ and $\rho_\tau''(D) = 2NM/D^3 > 0$, the chain rule gives:
		\begin{equation}
			\frac{\mathrm{d}^2 \rho_{\mathrm{eff}}}{\mathrm{d}D^2} = f''(\rho_\tau) \left(\rho_\tau'(D)\right)^2 + f'(\rho_\tau) \rho_\tau''(D). \label{eq:app_C_chain}
		\end{equation}
		When $D \in [D_{\min}^{(1)},\, D^{(1)*}]$, $\rho_\tau(D)$ is monotonically decreasing, which implies $\rho_\tau \geq \rho_\tau^*$, and thus $f'(\rho_\tau) \leq 0$. Combined with $f''(\rho_\tau) < 0$, both terms on the right-hand side of \eqref{eq:app_C_chain} are non-positive, resulting in $\frac{\mathrm{d}^2 \rho_{\mathrm{eff}}}{\mathrm{d} D^2} < 0$. Furthermore, since both $\rho_\tau(D)$ and $f$ are monotonically decreasing on $[\rho_\tau^*,\, \rho T/M]$, the composite function $\rho_{\mathrm{eff}}(D)$ is strictly monotonically increasing on $[D_{\min}^{(1)},\, D^{(1)*}]$. According to the composition properties preserving concavity, $R(D)$ is strictly concave and monotonically increasing over this interval. When $D > D^{(1)*}$, we have $f'(\rho_\tau) > 0$, and $\rho_{\mathrm{eff}}(D)$ becomes monotonically decreasing. Therefore, $D^{(1)*}$ is the optimal tradeoff point.

		\section{Proof of Theorem \ref{thm:Rate_Distortion_equal_power}}\label{app:D}
		
		Let $R(T_\tau) = \frac{T - T_\tau}{T}h(T_\tau)$. Taking the derivatives with respect to $T_\tau$ yields
		\begin{align}
			R'(T_\tau) &= -\frac{1}{T} h(T_\tau) + \frac{T - T_\tau}{T} h'(T_\tau), \\
			R''(T_\tau) &= -\frac{2}{T} h'(T_\tau) + \frac{T - T_\tau}{T} h''(T_\tau).
		\end{align}
		Since $\rho_{\mathrm{eff}}(T_\tau)$ is strictly  increasing and concave with respect to $T_\tau$, and the log-determinant function preserves concavity, it follows that $h(T_\tau)$ is also  strictly increasing and concave. That is, $h'(T_\tau) > 0$ and $h''(T_\tau) < 0$. Under the physical constraint $T_\tau \leq T$, we have $R''(T_\tau) < 0$ globally. Therefore, $R(T_\tau)$ is a strictly concave function, and there exists a unique maximum point $T_\tau^*$ satisfying $R'(T_\tau^*) = 0$.
		
		For the composite function $R(D) = R(T_\tau(D))$, from \eqref{eq:T_tau_D}, we have $T_\tau'(D) = -NM^2/(\rho D^2) < 0$ and $T_\tau''(D) = 2NM^2/(\rho D^3) > 0$. The chain rule gives
		\begin{equation}
			\frac{\mathrm{d}^2 R(D)}{\mathrm{d}D^2}
			= R''(T_\tau)\!\left(T_\tau'(D)\right)^2
			+ R'(T_\tau)\,T_\tau''(D).
			\label{eq:app_D_chain}
		\end{equation}
		When $D \in [D_{\min}^{(2)}, D^{(2)*}]$, since $T_\tau(D)$ is monotonically decreasing, we have $T_\tau \geq T_\tau^*$, which implies $R'(T_\tau) \leq 0$. Combined with the facts that $R''(T_\tau) < 0$ and $T_\tau''(D) > 0$, both terms on the right-hand side of \eqref{eq:app_D_chain} are non-positive. Consequently, $\frac{\mathrm{d}^2 R(D)}{\mathrm{d}D^2} < 0$ , meaning that $R(D)$ is strictly concave on the interval $[D_{\min}^{(2)}, D^{(2)*}]$.

		\section{Proof of proposition \ref{prop:lower_bound}}\label{app:E}
		
		Assuming i.i.d. Gaussian signaling with equal power allocated across antennas, the covariance matrix satisfies $\mathbb{E}[\mathbf{X}_d \mathbf{X}_d^H] = \frac{P_d T_d}{M} \mathbf{I}_M$. Let $\mathbf{R} := \frac{1}{\sigma_z^2} \mathbf{X}_d \mathbf{X}_d^H$; it then follows that $\mathbb{E}[\mathbf{R}] = \frac{\rho_d T_d}{M} \mathbf{I}_M$. The sensing MSE can be written as
		\begin{equation}
			\mathrm{MSE} = N \cdot \mathbb{E}_{\mathbf{R}} \left[ \mathrm{Tr} \left( (c\,\mathbf{I}_M + \mathbf{R})^{-1} \right) \right]. \label{eq:app_MSE_R}
		\end{equation}
		Since the function $f(\mathbf{A}) = \operatorname{Tr}(\mathbf{A}^{-1})$ is strictly convex over the cone of positive definite matrices, applying  Jensen's inequality to \eqref{eq:app_MSE_R} yields
		\begin{align}
			\mathrm{MSE} 
			&\ge N \cdot \mathrm{Tr}\left( \left(c\,\mathbf{I}_M + \mathbb{E}[\mathbf{R}]\right)^{-1} \right) \notag \\
			&= N \cdot \mathrm{Tr}\left( \left( \left(c + \frac{\rho_d T_d}{M}\right) \mathbf{I}_M \right)^{-1} \right) \notag \\
			&= \frac{NM^2}{Mc + \rho_d T_d}. \label{eq:app_lb}
		\end{align}
		From the definition of $c$, we have $Mc = M + \rho_\tau T_\tau$. Utilizing the power constraint $\rho_\tau T_\tau + \rho_d T_d = \rho T$, the denominator in \eqref{eq:app_lb} simplifies to $M + \rho T$, which directly leads to \eqref{eq:MSE_lower_bound}.

		\section{Proof of proposition \ref{prop:DA_optimal_time}}\label{app:F}
		
		Assume, to the contrary, that the optimal policy is $\Theta' = \{T_\tau', \rho_\tau', \rho_d'\}$ with $T_\tau' > M$. We can construct a new policy $\tilde{\Theta} = \{M, \tilde{\rho}_\tau, \tilde{\rho}_d\}$ that preserves the total power expended in both phases:
		\begin{align}
			E_\tau := \tilde{\rho}_\tau M = \rho_\tau' T_\tau', \qquad
			E_d    := \tilde{\rho}_d(T-M) = \rho_d'(T-T_\tau').
		\end{align}
		
		\textit{Sensing constraint.} Under fixed parameters $c$ and $E_d/M$, $D_{\mathrm{RMT}}$ is strictly monotonically increasing with respect to $\rho_d$. Since the data power of the new policy satisfies $\tilde{\rho}_d = E_d/(T-M) < \rho_d'$, it follows that $\tilde{D}_{\mathrm{RMT}} < D_{\mathrm{RMT}}' \leq D$. Consequently, the sensing distortion constraint is strictly satisfied.
		
		\textit{Communication performance.} Under a fixed $E_d$, the achievable rate $R(T_d) = \frac{T_d}{T} h\left(\frac{E_d}{T_d}\right)$ is strictly monotonically increasing with respect to $T_d$ (as shown in Appendix B). Given that $\tilde{T}_d = T - M > T - T_\tau' = T_d'$, we have $\tilde{R} > R'$.
		
		Therefore, the constructed policy $\tilde{\Theta}$ strictly outperforms $\Theta'$ while satisfying all the constraints. This contradicts the presumed optimality of $\Theta'$, thereby proving that $T_\tau^* = M$.

		\section{Proof of Theorem \ref{thm:Rate_Distortion_DA}}\label{app:G}
		
		Substituting \eqref{eq:rho_tau_rho_d} into \eqref{eq:rho_eff_reduced}, $\rho_{\mathrm{eff}}$ can be simplified into a rational function with respect to $\rho_d$:
		\begin{equation}
			g(\rho_d) = \frac{P\rho_d - C\rho_d^2}{A + B\rho_d},
		\end{equation}
		where $P = \rho T$, $A = M + \rho T > 0$, $B = 2M-T$, and $C = T-M > 0$. Since this function shares the exact same rational structure as $f(\rho_\tau)$ in Appendix \ref{app:C}, following the concavity arguments presented in Appendix \ref{app:C}, we obtain $g''(\rho_d) < 0$. The unique maximum point is thus given by
		\begin{equation}
			\rho_d^* = \frac{-(M+P)
				+ \sqrt{(M+P)^2 + \frac{(M+P)(2M-T)P}{T-M}}}{2M-T}.
		\end{equation}
		Substituting $\rho_d^*$ into \eqref{eq:D_RMT_reduced} yields $D^{(3)*}= D_{\mathrm{RMT}}(\rho_d^*)$.
		
		Next, we derive the inverse function $\rho_d(D)$. Let $\kappa := NM/D$ and $u := 1 + \rho T/M$. By rearranging the equation $D_{\mathrm{RMT}}(\rho_d) = D$, we obtain a quadratic equation with respect to $\rho_d$:
		\begin{equation}
			\frac{T-M}{M}\rho_d^2 - (u - \kappa)\rho_d - \kappa(u - \kappa) = 0.
		\end{equation}
		Taking the positive root directly yields \eqref{eq:rho_d_D}. Based on the monotonicity of $D_{\mathrm{RMT}}(\rho_d)$, it is evident that $\rho_d(D)$ is also strictly monotonically increasing with respect to $D$.
		
		Finally, we analyze the concavity of $\rho_{\mathrm{eff}}(D) = g(\rho_d(D))$. Applying the chain rule, we have
		\begin{equation}
			\frac{\mathrm{d}^2\rho_{\mathrm{eff}}}{\mathrm{d}D^2}
			= g''(\rho_d)\bigl(\rho_d'(D)\bigr)^2 + g'(\rho_d)\,\rho_d''(D).
		\end{equation}
		When $D \in [D_{\min}^{(3)} , D^{(3)*}]$, we have $\rho_d \le \rho_d^*$, which implies $g'(\rho_d) \ge 0$. Combined with $g''(\rho_d) < 0$ and the fact that $\rho_d''(D) < 0$ (obtained by taking the second-order derivative of \eqref{eq:rho_d_D}), both terms on the right-hand side of the above equation are non-positive. Therefore, $\rho_{\mathrm{eff}}(D)$ is strictly concave and monotonically increasing over this interval. By the composition properties preserving concavity, $R(D)$ exhibits the same properties. When $D > D^{(3)*}$, $g'(\rho_d) < 0$ and $\rho_{\mathrm{eff}}(D)$ becomes monotonically decreasing, establishing $D^{(3)*}$ as the optimal tradeoff point.

	\end{document}